\newif\ifconf
\conffalse
\documentclass[letterpaper,twocolumn,10pt]{article}
\ifconf
  \usepackage{usenix-2020-09}
\else
  \usepackage[margin=1in]{geometry}
  \usepackage{mathptmx}
  \usepackage[T1]{fontenc}
  \usepackage[utf8]{inputenc}
  \usepackage[kerning,spacing]{microtype}
  \usepackage{cite}
  \usepackage[hidelinks]{hyperref}
  \usepackage{placeins}
\fi
\usepackage{xurl}
\usepackage{graphicx}
\usepackage{booktabs}
\usepackage{amssymb}
\usepackage{float}
\usepackage{mathtools}
\usepackage{subcaption}
\usepackage{makecell}
\usepackage{amsthm}
\newtheorem{proposition}{Proposition}

\usepackage{cleveref}
\crefname{appendix}{Appendix}{Appendices}
\Crefname{appendix}{Appendix}{Appendices}
\crefname{proposition}{Proposition}{Propositions}
\Crefname{proposition}{Proposition}{Propositions}
\crefname{definition}{Definition}{Definitions}
\Crefname{definition}{Definition}{Definitions}

\newcommand{\papertitle}{Removing the Watermark Is Not Enough: Forensic
Stealth in Generative-AI Watermark Removal}
\ifconf
  \title{\Large\bfseries \papertitle}
  \author{{\rm Anonymous Author(s)}}
\else
  \title{\papertitle}
  \author{
    Yevin Nikhel Goonatilake\\
    George Mason University\\
    \texttt{ygoonati@gmu.edu}
    \and
    Giuseppe Ateniese\\
    George Mason University\\
    \texttt{ateniese@gmu.edu}
  }
\fi
\date{}
\ifconf\else
  \hypersetup{
    pdftitle={\papertitle},
    pdfauthor={Yevin Nikhel Goonatilake and Giuseppe Ateniese},
    hypertexnames=false
  }
\fi

\begin{document}
\maketitle
\ifconf\else
  \pdfbookmark[1]{Abstract}{abstract}
\fi
\begin{abstract}
The literature on watermark removal has largely asked whether an attacker
can make the watermark verifier fail while preserving the appearance of
the image. This is a useful test, but it does not capture the purpose of
removal in applications where watermarks support provenance. In such
settings, the attacker wants the image to pass as ordinary content. If
the removal process leaves a recognizable statistical trace, the
watermark may have disappeared, but deniability has not been restored.
We call this missing requirement forensic stealth.

We evaluate six recent attacks spanning four different removal
strategies and find that all leave strong forensic traces. At a $1\%$
false-positive target, attack-specific detectors identify at least
$99\%$ of the removal outputs. In a separate image-by-image assessment
of five attacks, only one of 750 outputs removes the watermark, remains
within the fidelity budget, and evades forensic detection. The
consistency of this result across different mechanisms shows that
current evaluation practice overlooks a central part of the security
problem.

We also ask whether forensic stealth is possible in principle. Under
explicit idealized assumptions, we show that exact forensic stealth is
possible when a remover preserves source content and resamples the
remaining detail from the corresponding clean distribution. In this
model, the resulting outputs exactly match the clean-image distribution
while remaining within the distortion budget. This shows
that removal traces are not inevitable and places the practical
difficulty in generating source-appropriate clean variation without
damaging the image. We argue that forensic stealth should become part
of the standard by which watermark removal is judged.
\end{abstract}

\section{Introduction}

Generative-AI systems have made it easy to create persuasive synthetic
images, including deepfakes, fabricated campaign material, and other
content meant to mislead. This turns provenance into a practical
security problem. Platforms, investigators, journalists, and end users
need mechanisms that help them decide when an image should be treated as
machine-generated or otherwise suspicious. That motivation has appeared not only in technical work but also in policy: a 2023 U.S. executive order and the EU AI Act
both identified watermarking and related provenance mechanisms as
responses to risks from AI-generated content~\cite{US23,EU24}.

Watermarking is one of the main technical proposals for addressing this
problem~\cite{US23,EU24,sok_watermarking_aigc}. The idea is to embed an
imperceptible marker into model outputs during or after generation time and to allow
a later verifier to test for that marker. For such a mechanism to be
useful, a watermark should satisfy several conditions at once. It
should not visibly harm the image, it should be trustworthy with a low false
positive rate, it should remain detectable after
routine editing and moderate adversarial manipulation, and in the
cryptographic setting it should remain undetectable to parties who do
not know the verification
key~\cite{Aaronson22,Kirchenbauer_23,sok_watermarking_aigc,C:ChrGun24,gunn2025undetectable}. Recent work has made these goals precise for generative models by
casting watermarking in coding-theoretic
terms~\cite{C:ChrGun24,gunn2025undetectable,francati2025codinglimits}.

Recent theory has also clarified that robustness has real limits. The
``Watermarks in the Sand'' (WiTS)
work~\cite{zhang2024watermarks} shows that strong
robustness does not preclude removal: an adversary with enough power,
and in particular access to quality and perturbation oracles, can drive
the watermark below detection while preserving content quality. Yet, this
does not make watermarking pointless. In
many deployment environments, watermarking can still be useful as a
screening signal, a moderation aid, or a provenance cue against less
capable attackers~\cite{sok_watermarking_aigc}. But this perspective does change how watermark
removal should be evaluated. The relevant question is whether
defeating the verifier also restores deniability in practice.

That is where the current removal literature misses the point. Most removal evaluations treat a remover as successful when two conditions are met:
the original watermark detector no longer triggers, and the output image
still looks good~\cite{unmarker_sp25,watermarkattacker_neurips24,ctrlregen_iclr25,boundary_leakage,nfpa_attack,zhang2024watermarks}.
Those conditions are natural, but they do not capture the attacker's
real objective. In the settings that motivate watermark removal, success
means restoring deniability. If the image can still be flagged as
manipulated because the removal process itself left a recognizable
trace, the attack remains unsuccessful. Concurrent work 
independently identifies this forensic leakage as a third evaluation axis~\cite{forensic_cost_arxiv26}.

Consider a simple example. Suppose an attacker generates a fake
political image, removes its watermark with UnMarker~\cite{unmarker_sp25},
and then distributes the result as ordinary content. If a downstream detector
can still identify it as the output of a removal pipeline, the image may remain
subject to further scrutiny. This missing property is what
we call \emph{forensic stealth}: after removal, the output should not
only look visually acceptable, but should also be statistically
indistinguishable from ordinary non-removal releases.

In our experiments, current watermark removers do not return images to
a clean forensic state. Across six attacks spanning
four removal families, detectors trained independently for each remover
distinguish processed outputs from clean images at $99\%$ or
better true-positive rate at a validation-calibrated $1\%$ false-positive
target, while a pooled detector evaluated on the combined six-set mixture
reaches $97.62\%$.
Across several distinct mechanisms, current
removal pipelines consistently introduce a separate detectable signal,
regardless of whether the original watermark remains detectable.

To test whether this is a general weakness of the current attack
landscape, we study the families represented in the WAVES
benchmark~\cite{waves_arxiv24} and extend beyond them. Our evaluation
includes distortion-based optimization (UnMarker~\cite{unmarker_sp25}),
diffusion-based regeneration
(WatermarkAttacker (WMA)~\cite{watermarkattacker_neurips24},
CtrlRegen+~\cite{ctrlregen_iclr25}), latent-space inversion and
perturbation (the Next Frame Prediction Attack (NFPA)~\cite{nfpa_attack},
Boundary Leakage~\cite{boundary_leakage}), and stochastic erosion
(Watermarks in the Sand~\cite{zhang2024watermarks}). Across all six
attacks we observe learnable forensic traces.  We then analyze UnMarker
in depth to characterize the phenomenon, showing that its trace
persists under various post-processing, has a distinctive residual-spectrum profile, and creates a three-way tension among removal success,
image quality, and forensic stealth. This changes the design target for
watermark removal. A next-generation remover will have to meet a
substantially stronger objective: it must erase the watermark,
preserve the utility of the image, and return the output to a clean
forensic state.

Our contributions are the following:

\noindent\textbf{(1). A security formulation of forensic stealth.}
We treat deniability, rather than evasion of a particular detector, as
the attacker's security objective. Concurrent work by Evennou and
Kijak~\cite{forensic_cost_arxiv26} independently reaches the same
high-level conclusion that forensic detectability is a third axis of
watermark-removal evaluation. We develop this shared insight in the keyed PRC setting by defining forensic stealth as computational
indistinguishability between the remover's released outputs and a
clean-release distribution. We pair this formal target with an exact
same-image empirical criterion: watermark rejection, source fidelity,
and forensic-detector evasion must hold simultaneously on the released image.

\noindent\textbf{(2). Failure of current removal attacks.}
We evaluate six state-of-the-art removers spanning distortion-based
optimization, diffusion-based regeneration, latent-space inversion,
and stochastic erosion. Independent forensic detectors distinguish
processed outputs from clean images at $99\%$ or better true-positive
rate at a $1\%$ false-positive target, and a single pooled detector
reaches $97.62\%$. We also analyze forensic signatures using residual
spectra and learned representations, revealing substantial structure
across the evaluated output sets. In the exact same-image evaluation,
only $1/750$ outputs across five attack-specific samples satisfy
watermark removal, source fidelity, and detector evasion simultaneously
(\Cref{sec:results:joint}).

\noindent\textbf{(3). Validation and adaptive evaluation.} We test whether detection can be explained by metadata, encoding, file size, split leakage, or generic generative processing, and include common image transformations in the clean class to control for incidental processing artifacts (\Cref{sec:background,sec:results:validations}). We additionally show that fixed-detector evasion does not establish forensic stealth in the tested CtrlRegen+ setting: Projected Gradient Descent (PGD) evades the detector's $0.5$ decision threshold on $99.97\%$ of outputs, yet a newly trained detector achieves AUROC~$1.000$ (\Cref{app:adaptive}).

\noindent\textbf{(4). A design target and possibility boundary for
future removers.}
We formulate a three-term objective combining source fidelity,
watermark suppression, and resemblance to ordinary clean releases.
An abstract, idealized construction shows that, under strong assumptions, a remover can reproduce the clean distribution while remaining inside the
distortion budget. Together, these results identify matching the clean
output distribution under a quality constraint as the central challenge
for future removers (\Cref{sec:discussion_stealth,sec:stealth:boundary}).

\section{Background}
\label{sec:background}

\subsection{Forensic baseline}
Current evaluations of watermark removal usually ask whether a watermark can be erased
without visibly damaging the image~\cite{waves_arxiv24,forensic_cost_arxiv26}. For the setting studied here, the
question has a third part. A practically successful remover must defeat
the verifier, preserve the image, and avoid leaving behind a detectable
trace of removal. To make that point precise, we need only a minimal
model of watermarking.

Starting from an image $x$ and a secret key $k$, the embedder produces
a watermarked image $x_w = \mathsf{WM}(x; k)$, and the verifier
$\mathsf{Verify}(\cdot; k)$ decides whether a candidate image should be
treated as watermarked, typically by computing a detection statistic and
comparing it to a threshold. Such a
scheme must satisfy competing demands. The watermark should survive
compression, resizing, and moderate adversarial editing while the image
stays visually close to the original. In a secret-key setting the
verifier can test for it and an observer without the key
cannot~\cite{sok_watermarking_aigc,C:ChrGun24,gunn2025undetectable}.

We work in the pseudorandom-code (PRC) setting introduced by Christ and
Gunn and instantiated for images by Gunn, Zhao, and
Song~\cite{C:ChrGun24,gunn2025undetectable}, where watermarked outputs
are intended to be computationally indistinguishable from clean outputs
to any efficient observer without the secret key, including across
polynomially many samples. For the PRC setting studied here, the
corresponding clean distribution is therefore the right forensic
baseline. Computational indistinguishability is preserved by any
efficient, randomized, key-independent remover $K$: otherwise, composing
a distinguisher with $K$ would break PRC security. Thus, for corresponding
input ensembles, no efficient image-only test can tell whether $K$ was
applied to clean or PRC-watermarked inputs. This guarantee does not cover
unrelated input sources. Results on our heterogeneous generated, natural,
and artistic datasets therefore measure empirical transfer.

This comparison is meaningful only if representation artifacts are kept
out of the signal. A detector can appear to succeed because of file
format, metadata, encoder quirks, compression settings, or systematic
file-size differences introduced by the evaluation pipeline. We
therefore standardize export and preprocessing across both classes so
that any remaining separability reflects the transform itself. We
instantiate these controls in
\Cref{sec:results:validations}.
One shortcut is specific to the five attacks with generative stages. Their
attacked images pass through diffusion inversion, regeneration, or
inpainting, while pristine clean images do not. A detector could
therefore separate the classes by recognizing generative processing
rather than removal. We test this possibility with family-specific matched-light controls, which are lighter than the full pipelines and do not isolate the stage responsible for detection (\Cref{sec:setup:data}).

\subsection{Classes of watermark removal attacks}
Starting from the WAVES taxonomy~\cite{waves_arxiv24}, the removers
studied here fall into four classes. WAVES distinguishes distortion,
regeneration, and adversarial attacks. In the keyed PRC setting
considered here, it is useful to separate latent-space inversion from
regeneration and to treat stochastic erosion as its own class. The four
classes differ in where the attack intervenes: directly in pixel space,
through reconstruction under a generative prior, through perturbation
of an inferred latent state, or through repeated oracle-guided local
edits.

\paragraph{Distortion-based optimization.}
Distortion-based optimization attacks work directly in image space. At the weak end, this
family includes classical transformations such as recompression, blur,
and Gaussian noise. At the stronger end, it includes optimization-based
pipelines that perturb the image directly so as to disrupt watermark
carriers while preserving perceptual quality. We represent this class
with UnMarker~\cite{unmarker_sp25}, a universal black-box pipeline that
operates directly on the image and requires no detector queries.

\paragraph{Regeneration.}
Regeneration attacks remove the watermark by destroying fine image
structure and reconstructing the result through a denoiser or
generative prior. Here the watermark disappears as part of the
reconstruction process rather than because a visible carrier is
directly targeted. WatermarkAttacker~\cite{watermarkattacker_neurips24}
does this in a diffusion-purification style, adding noise and then
denoising so that the watermark is lost during reconstruction.
CtrlRegen+~\cite{ctrlregen_iclr25} instead uses controlled regeneration
with semantic and spatial guidance, rebuilding the image from
progressively noisier latent states while trying to preserve the input
content.

\paragraph{Latent-space inversion.}
Latent-space inversion attacks first invert the image into the latent
representation of a diffusion model, typically through denoising diffusion implicit model (DDIM)~\cite{ddim_iclr21} inversion,
and then perturb that representation before reconstructing the image.
NFPA~\cite{nfpa_attack} does this through calibrated latent
perturbations, while Boundary Leakage~\cite{boundary_leakage} exploits
information exposed by the watermark's detection boundary. Because the
inversion step is tied to the underlying model, these attacks are
limited in our evaluation to images generated by Stable
Diffusion~2.1~\cite{stable_diffusion_21,ldm_cvpr22}.

\paragraph{Stochastic erosion.}
Stochastic-erosion attacks do not target a specific carrier at all.
Instead, they apply many small local edits, often inpainting-style
patches, and retain only those accepted by a quality oracle. The image
then drifts away from the watermarked version through a sequence of
quality-preserving steps, and the watermark degrades as a byproduct of
that drift. We include Watermarks in the
Sand~\cite{zhang2024watermarks} as the prominent example of this class.
Its significance is primarily theoretical: Zhang et al.\ use it as the
constructive attack in their impossibility result, showing that under
access to a perturbation oracle and a quality oracle, strong
watermarking can be defeated by a quality-preserving random walk.

We do not separately evaluate the full adversarial suite in WAVES~\cite{waves_arxiv24}. We treat its surrogate-detector attacks as out of scope because they learn an attack objective from watermark-labeled examples. Its black-box embedding attacks instead optimize public image encoders without access to the watermark signal. These attacks remain meaningful under key secrecy and fall within the broader image-space optimization class represented here by UnMarker.

\subsection{Threat model and success criteria}
In this paper, we define a successful removal attack as one that defeats the verifier, preserves the utility of the image, and returns the output to a clean forensic state. We now make those three conditions precise.

The attacker applies a removal transform $K$ to a watermarked image and
releases only the final output
\[
x_a = K(x_w).
\]
When we study post-processing separately, we instead define the final
release as $x_a=A(K(x_w))$, where $A$ is the post-processing operation;
all three success criteria are evaluated on this output. The defender sees only the released image and asks whether it passed through a removal pipeline. The task is therefore forensic: the defender need not prove that a watermark remains. It may obtain labeled samples
from the specified remover, but has no access to the attacker's model,
prompts, or key, as formalized below.

The first condition is verifier evasion:
\[
\mathsf{Verify}(x_a; k) = 0.
\]
The second is source-fidelity preservation:
\[
d(x_a,x_w)\leq\tau,
\]
where $d(\cdot,\cdot)$ measures the cumulative change from the
watermarked input to the final released output, and $\tau$ is an
application-level tolerance. We instantiate this criterion with LPIPS
in \Cref{sec:setup:quality}. These are the two criteria that
current removal work most often makes explicit~\cite{waves_arxiv24,unmarker_sp25,
watermarkattacker_neurips24,ctrlregen_iclr25,zhang2024watermarks}.

The missing third condition is forensic stealth. We formalize it through
a known-remover distinguishing experiment. Fix a security parameter
$\lambda$ and an efficient, randomized, key-independent remover $K$, and
sample $k\leftarrow\mathsf{KeyGen}(1^\lambda)$ once. Let
$\mathbb{P}_{\mathrm{clean},\lambda}$ and
$\mathbb{P}_{\mathrm{wm},\lambda,k}$ denote the corresponding clean and
watermarked release distributions, and set
$\mathbb{P}_{\mathrm{rem},\lambda,k}^{K}
=K(\mathbb{P}_{\mathrm{wm},\lambda,k})$. The clean and watermarked
distributions include source, prompt, and generation randomness; the
removed distribution also includes the randomness of $K$. The defender
$D$ receives polynomially many labeled samples from the clean and
removed distributions. For a fresh challenge, the challenger samples
$b\leftarrow\{0,1\}$, gives $D$ a sample from
$\mathbb{P}_{\mathrm{clean},\lambda}$ if $b=0$ and from
$\mathbb{P}_{\mathrm{rem},\lambda,k}^{K}$ if $b=1$, and $D$ returns
$b'$. Its advantage is
\[
\operatorname{Adv}^{\mathrm{fs}}_{K,D}(\lambda)
=2\left|\Pr[b'=b]-\frac{1}{2}\right|,
\]
where the probability is over all randomness in the experiment. We call
$K$ forensically stealthy if this advantage is negligible for every
efficient image-only $D$ with this sample access. We use
$\mathbb{P}_{\mathrm{rem}}^K\approx_{\mathrm{c}}
\mathbb{P}_{\mathrm{clean}}$ as shorthand for this experiment. Our experiments instantiate this known-remover game on specified clean and removed distributions; they do not establish transfer to an unseen remover or a different clean distribution.

This distributional condition is the security target; our per-image
endpoint is an empirical test against one fixed detector. A successful
distinguisher is evidence against stealth, whereas evading this detector
does not establish indistinguishability. Fix the detector and choose a
threshold $t$ on an independent clean validation split; let $D_t$ denote the
resulting binary test. A single release succeeds only if all three
conditions hold on that same image:
\[
J_K(x_w)=\mathbf{1}\!\left[
\begin{aligned}
&\mathsf{Verify}(x_a;k)=0 \;\wedge\;\\
&d(x_a,x_w)\le\tau \;\wedge\;\\
&D_t(x_a)=0
\end{aligned}
\right].
\]

The joint success rate is the mean of $J_K$ over common image
identifiers. It is not the product of the three marginal rates. A remover
that removes on one subset and preserves quality on a disjoint subset has
a positive product and a joint of zero. We therefore report the
conjunction directly (\Cref{sec:results:joint}).

\section{Experimental Setup}
\label{sec:setup}

For each remover $K$, we ask whether one can tell from the final image
alone that the image has passed through the removal pipeline. We therefore treat $K$ as the randomized final-output transform induced by a
remover. Starting from a shared
image pool, we form disjoint clean and attacked classes, optionally apply the same family of
lightweight post-processing operations to both classes, and then train
a detector specific to $K$ to predict whether that transform was
applied (see \Cref{fig:pipeline}). NFPA, Boundary Leakage, and WiTS use
attack-specific dataset constructions, described below, because of input
compatibility, code availability, and computational constraints. Our primary evaluation trains a separate detector for each remover, so each result asks whether that pipeline is forensically detectable on its own terms. We also train a pooled detector to test whether one model can detect traces across the six evaluated removal methods. If a detector succeeds under controlled conditions, the corresponding
remover fails the empirical forensic-stealth test. \Cref{sec:results:validations}
examines shortcut-artifact controls across the six detectors.

\Cref{tab:unified-detection} reports
watermark removal, paired source fidelity, and forensic detection. The
evaluation uses two types of inputs. For detector and signature analyses,
we apply $K$ to both watermarked and never-watermarked inputs. Under the closure argument in
\Cref{sec:background}, this is a valid computational proxy when those
inputs belong to the corresponding clean distribution. The reported verifier outcomes,
paired fidelity measurements, and exact three-way results use genuinely
watermarked inputs directly. The LPIPS entries use separate paired
cohorts of $100$ outputs per attack. For Boundary Leakage, we report
forensic detection and the authors' removal rate, but no result from the
selected LPIPS audit. We evaluate exact three-way success separately on
five attack-specific cohorts of $150$ outputs each
(\Cref{tab:joint-outcome}).

\begin{figure}[t]
  \centering
  \includegraphics[width=\linewidth]{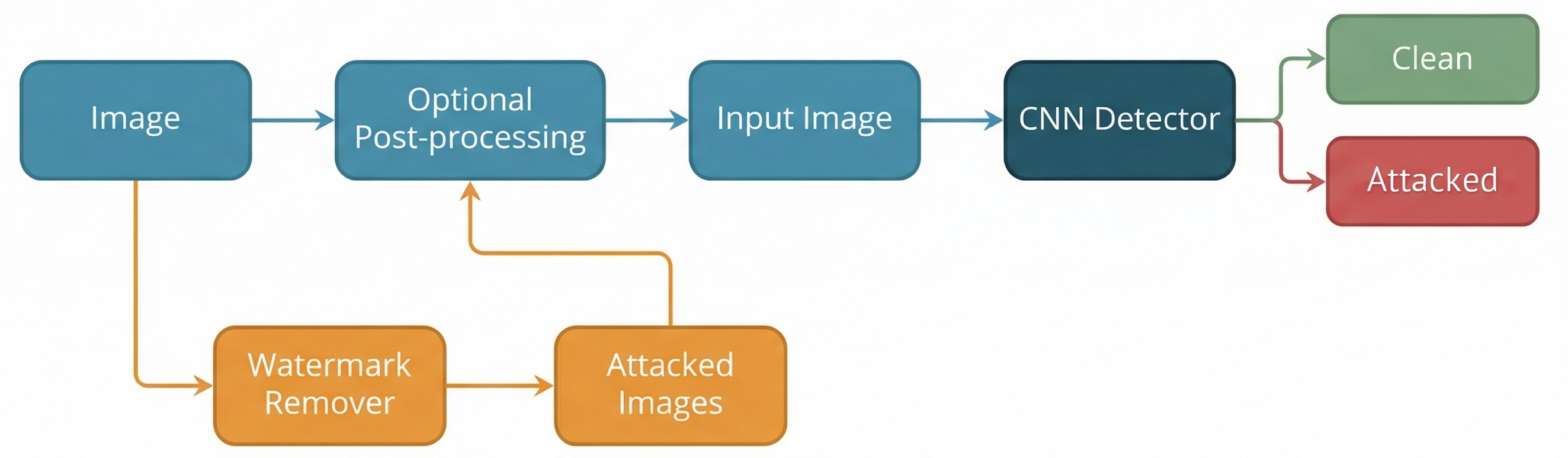}
  \caption{%
    \textbf{Construction and detection pipeline.}
    Clean releases form the negative class and removal outputs form the
    attacked class. Both may be post-processed before the detector
    decides whether an image passed through the pipeline.}
  \label{fig:pipeline}
\end{figure}

\subsection{Datasets}
\label{sec:setup:data}

\paragraph{Image sources.}
\label{sec:setup:datasources}
All images are standardized to $512 \times 512$ RGB. The common base pool
contains $157{,}984$ images drawn from five sources chosen to
span natural, online, photographic, artistic, and model-generated
content. U512 is our local name for a mixed-domain real-world corpus
drawn from the 130k Images (512x512) - Universal Image Embeddings
dataset~\cite{u512_kaggle}. AbstractArt and ArtMix are our local names
for corpora drawn from Abstract Art Images~\cite{abstractart_kaggle}
and Art Images Clear and Distorted~\cite{artmix_kaggle}, respectively;
ArtMix is included to test heterogeneity in sharpness and deformation.
Caltech256 contributes object-centric photographs from the
Caltech-256 dataset~\cite{caltech256_tr}. PRC-Gen contains images that
we generated under a PRC-based watermarking pipeline from prompts drawn
from the Gustavosta prompt dataset~\cite{gustavosta_sd_prompts}. These short names are local shorthand for the fixed corpora used
throughout the evaluation.

\paragraph{Common construction.}
From this common pool, we first separate underlying images into two
disjoint halves. One half provides ordinary clean releases for the
negative class. The other half provides inputs to the corresponding
removal pipeline, whose final outputs form the attacked class.
Formally, let $\mathcal{D}_{\mathrm{base}}$ denote the common image pool. We reserve one half as the clean class
$\mathcal{D}_{\mathrm{clean}}$ and the other as the attack-reserved half
$\mathcal{D}_{\mathrm{base,atk}}$, each of size $78{,}992$, and define
\[
\mathcal{D}_{\mathrm{atk}} = \{K(x) : x \in \mathcal{D}_{\mathrm{base,atk}}\}.
\]
The clean and attacked classes are therefore disjoint at the level of
underlying images. To test robustness to post-processing, we
define a family $\mathcal{A}$ of ten lightweight editing operations:
crop-resize, rotation, scaling, JPEG recompression, chroma subsampling,
quantization, Gaussian blur, bilateral filtering, non-local means
denoising, and color jitter. We then sample $20{,}000$ images from each
class and apply a random $A \sim \mathcal{A}$ to produce $40{,}000$
tampered images, split equally between tampered-clean and
tampered-attacked examples. The binary label is
\[
y(z) =
\begin{cases}
1 & \text{if } z \in \mathcal{D}_{\mathrm{atk}} \text{ or } z
    \text{ is tampered-attacked},\\
0 & \text{if } z \in \mathcal{D}_{\mathrm{clean}} \text{ or } z
    \text{ is tampered-clean}.
\end{cases}
\]UnMarker~\cite{unmarker_sp25},
WatermarkAttacker~\cite{watermarkattacker_neurips24}, and
CtrlRegen+~\cite{ctrlregen_iclr25} use this common construction,
yielding $197{,}984$ images after tampered augmentation.

\paragraph{Attack-specific datasets.}
The common construction is used directly by only those three attacks. NFPA~\cite{nfpa_attack} and Boundary Leakage~\cite{boundary_leakage} instead require SD2.1-compatible inputs, while WiTS~\cite{zhang2024watermarks} is
evaluated on a subset because its sequential random walk is expensive. Exact per-attack dataset sizes are reported in \Cref{tab:per-attack-data} of \Cref{app:training-details}.

\emph{NFPA.} NFPA~\cite{nfpa_attack} operates through DDIM~\cite{ddim_iclr21}
inversion and therefore requires inputs generated by a model compatible
with Stable Diffusion~2.1~\cite{stable_diffusion_21,ldm_cvpr22}. Its
input pool combines $14{,}997$ PRC-Gen images with $10{,}000$ additional
SD2.1 images generated from the same Gustavosta prompt dataset
\cite{gustavosta_sd_prompts} and settings. After the attack-specific validity
filter, $24{,}996$ images remain per class, for a $49{,}992$-image base
dataset and $69{,}992$ images after tampered augmentation.

\emph{Boundary Leakage.} Boundary Leakage~\cite{boundary_leakage} is
also restricted to SD2.1-generated inputs, and the attack code was not
publicly available at the time of evaluation. We therefore use the
authors' released set of approximately $10{,}000$ attacked images
together with an independently generated SD2.1 clean set and
$3{,}000$ tampered images, for a total of approximately $23{,}000$
images. Our reference follows the same published prompt
dataset, model family, resolution, and number of diffusion steps,
reducing potential distribution mismatch. Because the attack itself was 
unavailable, we report Boundary Leakage as a comparison against released 
outputs rather than a pipeline we ran ourselves. This limitation does not 
change the result itself: the released attacked images remain readily 
distinguishable from the matched clean reference.

\emph{WiTS.} Watermarks in the Sand~\cite{zhang2024watermarks} applies a
quality-preserving random walk with 50 sequential inpainting passes per
image. This makes the attack computationally expensive and limits
throughput, so we evaluate it on a subset of the common pool, yielding
approximately $14{,}568$ images per class and approximately $32{,}136$
images total after tampered augmentation.

\paragraph{Matched-light clean negatives.}
Several attacks use generation, inversion, or inpainting, so their
detectors may respond to processing strength rather than a
removal-specific trace. We evaluate benign images that undergo lighter,
attack-specific versions of these operations and call them matched-light
negatives. Because they are not matched to the full pipelines in
processing strength, these controls test only whether light processing is sufficient to trigger the detectors; they do not isolate the stage responsible for detection. These controls are labeled clean, so a detector firing on them constitutes a false positive rather than evidence of watermark removal. UnMarker has no generative stage and therefore no matched-light counterpart. \Cref{tab:matched-light} lists each control.

\begin{table}[t]
\centering
\ifconf\small\else\footnotesize\fi
  \caption{\textbf{Benign controls for the generative-processing shortcut.}
  The control construction is attack-specific; UnMarker has no
  generative stage and therefore no matched-light control.}
  \label{tab:matched-light}
  \begin{tabular*}{\columnwidth}{@{\extracolsep{\fill}}ll@{}}
\toprule
\textbf{Attack} & \textbf{Benign control operation} \\
\midrule
Boundary Leakage & invert--denoise round-trip, $\epsilon = 0$ \\
NFPA             & no frequency perturbation ($\mathrm{nfp}_{xy}=0$) \\
WMA              & low regeneration noise (noise step $20$) \\
WiTS             & inpaint ${\approx}2\%$ of pixels overall \\
CtrlRegen+       & disjoint-source clean regenerations \\
\bottomrule
\end{tabular*}
\end{table}

\paragraph{Splits and integrity.}
\label{sec:setup:splits}
We split each dataset into train, validation, and test using 70/15/15
target proportions at the level of underlying images, stratified by
label and source so that each split preserves the same content-type mixture. Each source image and all of its derived versions therefore remain in the same split. Tampered images are balanced across operators, and the benign negatives were simply a $20\%$ portion of the original clean set. For pipelines that require attack-specific filtering or generated outputs, the final evaluation uses only usable images that pass those filters. 
After splitting, all files are renamed under a uniform convention to prove the detector doesn't exploit filenames, paths, or directory structure. Any learnable signal must therefore come from rendered pixels rather than filenames, paths, or directory layout.

\subsection{Quality and detector metrics}
\label{sec:setup:quality}
\label{sec:setup:detector}

\paragraph{Source fidelity.}
For each executable attack with valid source--output pairs, we evaluate a
fixed sample of $100$ outputs (seed 20260815). We measure cumulative
input-to-output LPIPS~\cite{zhang2018lpips} using AlexNet at
$256\times256$ with RGB inputs mapped to $[-1,1]$. An output passes when
$\mathrm{LPIPS}\leq0.35$. We report the median and pass count separately
from the detector test sets and the $n=150$ joint evaluation. We use
$0.35$ as our evaluation budget, not as a general cutoff for perceptual
similarity.

\paragraph{Forensic detector.}
Each attack receives an independent ResNet-50
detector~\cite{he2016resnet}, initialized from ImageNet-1K
weights~\cite{deng2009imagenet}, fine-tuned end-to-end, and evaluated on native
$512\times512$ images. We also train one pooled detector whose attacked class
combines all six removers; it is reported only as an aggregate result and used
for the representation analysis. Training and hardware details appear in
\Cref{app:training-details,app:pooled-detector}.

The detector outputs a removal score in $[0,1]$. Because the detectors are trained independently, raw scores are interpreted only relative to each detector's frozen threshold and are not compared across attacks. Our primary operating point is chosen on an independent validation split to target $1\%$ FPR and frozen before testing. We also report AUROC and held-out test-ROC TPR at $0.1\%$ FPR, treating the latter as descriptive because of its lower empirical resolution.

\section{Results}
\label{sec:results}

We begin with the broadest question: does forensic detectability appear
across the foremost attack families? Under the protocol of
\Cref{sec:setup}, detectors trained separately for each remover
distinguish processed outputs from clean images across all six attacks. In addition, the pooled detector separates clean and attacked images within the combined held-out mixture. The rest of the section asks whether that result survives scrutiny. We test two ways it could be an artifact: a representation shortcut and a generative-processing shortcut (\Cref{sec:results:validations}). We then ask how much post-processing is needed to weaken the trace (\Cref{sec:results:robustness}). Throughout, we report AUROC as a global separability measure and emphasize TPR@1\%\,FPR and TPR@0.1\%\,FPR, which reflect deployment settings where false alarms are critical.

\subsection{Detection across attack families}
\label{sec:results:detection}

For each removal algorithm, we train an independent forensic detector
on its own clean-vs.-attacked dataset using the protocol of
\Cref{sec:setup}. \Cref{tab:unified-detection} shows the central point
of this subsection: detectability appears across all six evaluated
attacks, spanning multiple removal strategies and datasets. At the validation-frozen operating points, realized held-out FPR ranges
from $0.77\%$ to $1.12\%$ across the six detectors.

\begin{table*}[t]
\centering
\footnotesize
\caption{\textbf{Forensic detection and attack-side outcomes across six
removal attacks.} Each row reports its attack-specific detector; the
final row reports the pooled detector. WM ASR is verifier rejection
among genuinely watermarked inputs, with the evaluated count in
parentheses. Acc is at the default $0.5$ threshold. Note NFPA and WiTS have higher LPIPS because inversion and repeated inpainting
resample fine image detail. $^{a}$Boundary Leakage is the
authors' reported value under their watermark and key.}
\label{tab:unified-detection}
\setlength{\tabcolsep}{3pt}
\begin{tabular*}{\textwidth}{@{\extracolsep{\fill}}llrrrrrr@{}}
\toprule
\textbf{Attack} & \textbf{Family}
  & \textbf{Acc}
  & \textbf{AUROC}
  & \makecell{\textbf{TPR at frozen}\\\textbf{1\% target}}
  & \makecell{\textbf{Test-ROC TPR@}\\\textbf{0.1\% FPR}}
  & \textbf{WM ASR}
  & \makecell{\textbf{LPIPS median}\\\textbf{[pass/$n$]}} \\
\midrule
UnMarker       & Dist.
  & 99.56\% & 0.9998 & 99.92\% & 99.76\%
  & 1.000 (186) & 0.217 [100/100] \\
WMA            & Regen.
  & 99.43\% & 0.9994 & 99.82\% & 94.62\%
  & 0.011 (186) & 0.066 [100/100] \\
CtrlRegen+     & Regen.
  & 99.66\% & 0.9999 & 99.95\% & 99.00\%
  & 0.668 (500) & 0.120 [100/100] \\
NFPA           & Inv./Pert.
  & 99.31\% & 0.9994 & 99.59\% & 93.73\%
  & 1.000 (168) & 0.426 [6/100] \\
Boundary Leak. & Inv./Pert.
  & 99.39\% & 0.9995 & 99.53\% & 98.83\%
  & 0.31$^{a}$ & --- \\
WiTS           & Erosion
  & 99.04\% & 0.9979 & 99.09\% & 96.23\%
  & 1.000 (123) & 0.357 [47/100] \\
\midrule
Pooled detector & All six
  & 98.24\% & 0.9976 & 97.62\% & 82.43\%
  & --- & --- \\
\bottomrule
\end{tabular*}
\end{table*}
 
Across the six evaluated attacks, reported ASR spans 0.011 to 1.000.
They are not uniformly successful removers, and detectability does not
track removal strength: WMA removes the watermark on almost
nothing and is still caught, while CtrlRegen+ removes it on two thirds of
watermarked inputs and is also caught. Because these rates are measured separately, they do not reveal how often a single output satisfies all three conditions. We measure that exact per-image conjunction of removal, source fidelity, and detector evasion in \Cref{sec:results:joint}.

\paragraph{Distortion-based optimization (UnMarker).}
UnMarker~\cite{unmarker_sp25} is the most general-purpose attack in our
evaluation and serves as the primary case study for the in-depth
analyses of \Cref{sec:results:robustness,sec:signature}. It uses iterative, LPIPS-constrained optimization to suppress the image features that carry the watermark. On the held-out test split it reaches AUROC $0.9998$ and holds $99.76\%$ TPR at $0.1\%$\,FPR (\Cref{tab:unified-detection}).

\paragraph{Regeneration-based removal (WatermarkAttacker, CtrlRegen+).}
WatermarkAttacker~\cite{watermarkattacker_neurips24} reconstructs the
image through a diffusion model after calibrated noising.
CtrlRegen+~\cite{ctrlregen_iclr25} performs a related regeneration step
in latent space, but uses control networks to preserve structure, layout,
and color. Their removal rates differ sharply: WatermarkAttacker removes
the watermark from $2/186$ inputs ($1.1\%$), while CtrlRegen+ succeeds on
$334/500$ ($66.8\%$). Despite this difference, both remain easily
detectable, reaching $99.82\%$ and $99.95\%$ TPR, respectively, at
$1\%$ FPR.

\paragraph{Latent-space inversion (Next Frame Prediction Attack, Boundary Leakage).}
NFPA~\cite{nfpa_attack} and Boundary Leakage~\cite{boundary_leakage}
both rely on DDIM inversion, latent-space perturbation, and regeneration
back to pixels. NFPA uses calibrated latent noise, whereas Boundary
Leakage explicitly probes the watermark verifier's decision boundary.
Both attacks require SD2.1-generated inputs and are therefore evaluated
on the restricted datasets described in \Cref{sec:setup:data}. We rerun
NFPA on compatible inputs. For Boundary Leakage, we compare the authors'
released attacked images with an independently generated clean SD2.1
reference set.
Both attacks remain readily detectable under their respective evaluation constructions
(\Cref{tab:unified-detection}).

\paragraph{Stochastic erosion (Watermarks in the Sand).}
WiTS~\cite{zhang2024watermarks} is the stochastic attack in our
evaluation: it repeatedly inpaints small random patches with Stable
Diffusion inpainting~\cite{ldm_cvpr22} and keeps a candidate only when its HPSv2
score~\cite{wu2023human} is at least that of the original watermarked
image.
In Zhang et al.'s construction~\cite{zhang2024watermarks}, watermark
erosion arises as a byproduct of this oracle-guided drift rather than
of direct optimization.

Because each image requires 50 sequential inpainting forward passes,
the attack is computationally expensive, so we evaluate it on a subset
of the base dataset of approximately $29{,}000$ images
(\Cref{sec:setup:data}). As \Cref{tab:unified-detection} shows, WiTS remains strongly detectable,
with AUROC $0.9979$ and TPR@0.1\%\,FPR of $96.23\%$.

\paragraph{Qualitative comparison.}
\Cref{fig:cross-attack-comparison} in
\Cref{app:visual-comparison} shows one illustrative source--output pair
for each attack. These particular outputs remain visually plausible and
closely preserve their source content to a human observer. Quantitative
fidelity and pass counts across the broader evaluation are reported in
\Cref{tab:unified-detection,tab:joint-outcome}.

\paragraph{Synthesis.}
The six attacks span four families, and in each known-remover evaluation the detector distinguishes the evaluated output set from its clean reference. The pooled representation also differs across the evaluated output sets (\Cref{fig:tsne-pooled}). Because those sets do not all share the same source distribution, this structure cannot be attributed solely to the removal mechanism. Within these evaluated distributions, the removers do not return images to a clean forensic state. They often trade an explicit watermark for an \emph{implicit watermark}: a learnable signal associated with the evaluated removal pipeline.

\subsection{Representation controls and pipeline integrity}
\label{sec:results:validations}

A detector could appear to succeed by exploiting file-format metadata,
compression side channels, or other non-pixel cues rather than genuine
processing traces. We therefore ran the same validation checks across
all six attack detectors under the shared protocol. We report the
numbers in detail for UnMarker~\cite{unmarker_sp25}, the main case
study used throughout the rest of the section, and summarize the
corresponding cross-attack results in
\Cref{app:cross-attack-controls}. The central control is BMP re-encoding, which removes container and
file-encoding cues and fixes file size without undoing artifacts already
present in the decoded pixels. The remaining checks ask whether the
conclusion survives re-export, metadata stripping, split audits, and
more aggressive image simplifications.

\paragraph{BMP re-encoding and file-size baseline.}
We re-encode every decoded image as an uncompressed bitmap (BMP), which
removes container, metadata, and file-encoding cues and produces a
constant file size at fixed resolution. For UnMarker~\cite{unmarker_sp25}, the detector remains at AUROC~0.9998 and TPR@0.1\%\,FPR~99.76\%, unchanged to four decimal places (\Cref{tab:cross-attack-controls} in \Cref{app:cross-attack-controls}). A classifier using \emph{only file size}
achieves AUC~$= 0.765$ in the native setting (clean images average
366\,KB; attacked images average 485\,KB), but collapses to chance
under BMP (AUC~$= 0.500$). This is the decisive control in the
subsection: once storage cues are removed, the pixel-based detector is
unchanged while the size-only baseline disappears.

\paragraph{Supporting audits.}
Re-encoding all images with identical PNG compression settings after stripping metadata produces metrics identical to native evaluation (AUROC~$= 0.9998$, TPR@0.1\%\,FPR~$= 0.9976$, identical confusion matrix), ruling out encoder-specific quirks as a source of
separability. The match is exact: no test
image crosses the decision threshold under re-export. We also
inventoried PNG container fields, including ancillary chunks, text metadata, and color profiles, and found no class-specific leakage patterns. Split integrity is intact: no file path or image identifier appears in more than one split in any dataset manifest, and zero exact cross-split duplicates were found across all attack datasets, excluding NFPA (\Cref{app:cross-attack-controls}). See \Cref{app:cross-attack-controls} for the full per-attack breakdown and a walkthrough of what each control rules out.

These controls show that the observed separability arises from the
pixel content produced by the remover itself.

\paragraph{Benign generative controls.}
The matched-light controls of \Cref{sec:setup:data} test only whether
light benign processing is sufficient to trigger the detectors; they do
not isolate removal-specific traces. As an auxiliary shortcut check, at a common score threshold of $0.5$, the recovered controls fire on
$1/244$ Boundary Leakage twins, $36/3{,}036$ CtrlRegen+, $5/416$ WiTS,
$49/1{,}231$ NFPA, and $82/2{,}975$ WMA, a range of $0.41\%$ to $3.98\%$.
UnMarker has no generative stage and therefore no twin.

\subsection{Exact empirical three-gate outcome}
\label{sec:results:joint}

Marginal rates do not show whether the same output passes all three security gates. We therefore evaluate five attack-specific samples, each containing
$150$ genuinely watermarked inputs. Removal, acceptable LPIPS, and detector evasion are measured on the same outputs using
validation-frozen $1\%$-FPR detector thresholds. No threshold is selected or adjusted using these $750$ outputs. All joint-evaluation outputs and their progenitors are disjoint from detector training, validation, and threshold calibration.

\begin{table}[t]
  \centering
  \footnotesize
  \caption{\textbf{Exact empirical three-gate outcome.} Within each row, 
  all columns are computed on the same $150$ outputs. ``Removed'' means verifier rejection,
  ``LPIPS'' means $\mathrm{LPIPS}\leq0.35$, and ``Evaded'' means falling
  below the detector threshold. The three-way count is their same-image
  conjunction, not a product of marginal rates. For $0/150$, the
  one-sided 95\% Clopper--Pearson upper bound is $1.98\%$; the WiTS
  $1/150$ result has a two-sided 95\% interval of $[0.02,3.66]\%$.}
  \label{tab:joint-outcome}
  \setlength{\tabcolsep}{2.5pt}
  \begin{tabular*}{\columnwidth}{@{\extracolsep{\fill}}lrrrr@{}}
    \toprule
    \textbf{Attack}
      & \textbf{Removed}
      & \textbf{LPIPS}
      & \textbf{Evaded}
      & \textbf{3-way} \\
    \midrule
    UnMarker   & $150/150$ & $150/150$ & $0/150$ & \textbf{$0/150$} \\
    CtrlRegen+ &  $96/150$ & $150/150$ & $0/150$ & \textbf{$0/150$} \\
    WMA        &   $1/150$ & $150/150$ & $0/150$ & \textbf{$0/150$} \\
    NFPA       & $150/150$ &  $6/150$ & $0/150$ & \textbf{$0/150$} \\
    WiTS       & $150/150$ &  $64/150$ & $1/150$ & \textbf{$1/150$} \\
    \bottomrule
  \end{tabular*}
\end{table}

Across the five attack-specific cohorts, only a single WiTS output passes all three gates, giving $1/750$ overall. This aggregate is descriptive because the rows represent different attack distributions, and the single success does not establish that WiTS matches the clean distribution. The table defines a forward-looking evaluation target rather than a uniform comparison: WMA was not designed around the deployed PRC verifier, while NFPA's inversion-based transformation rarely passes the LPIPS gate.

\subsection{Robustness to post-processing}
\label{sec:results:robustness}

Post-processing is already part of our main evaluation. We apply the same ten lightweight operations to clean and attacked images, so the operation itself does not determine the class label. Here we report UnMarker~\cite{unmarker_sp25} performance by operator and examine JPEG recompression at fixed quality levels. Across the full post-processed set, the detector reaches AUROC $0.9964$.

The ten operations cover geometric transformations, color changes,
smoothing, and compression. \Cref{tab:tpr-by-operator} reports detection
performance separately for each type of operation.

\begin{table}[t]
\centering
\ifconf\small\else\footnotesize\fi
\caption{\textbf{Detection sensitivity by post-processing operator.}
TPR on the operation-specific subsets of the UnMarker test set. The six geometric and
color-space operations share one row because each reaches 1.000 TPR
at both operating points.}
\label{tab:tpr-by-operator}
\setlength{\tabcolsep}{4pt}
\begin{tabular*}{\columnwidth}{@{\extracolsep{\fill}}lcc@{}}
\toprule
\textbf{Operator}
& \makecell{\textbf{TPR@}\\\textbf{1\%\,FPR}}
& \makecell{\textbf{TPR@}\\\textbf{0.1\%\,FPR}} \\
\midrule
Bilateral filter        & 0.96 & 0.15 \\
JPEG recompression      & 0.70 & 0.41 \\
Non-local means         & 0.83 & 0.61 \\
Gaussian blur           & 0.88 & 0.72 \\
\midrule
Six geometric and color-space operations
                        & 1.00 & 1.00 \\
\bottomrule
\end{tabular*}
\end{table}

All six geometric and color-space operations preserve complete detection
at both operating points. Smoothing and recompression weaken the detector:
at $1\%$ FPR, JPEG produces the lowest TPR ($0.70$), whereas at $0.1\%$
FPR bilateral filtering is hardest ($0.15$). Robustness therefore depends
on the operating point. All four weakening operations alter spatial detail. Their relationship
to the residual-spectrum profile in
\Cref{sec:signature:unmarker} is descriptive; the present analysis does
not identify why detection degrades. Because the evaluated output sets have different residual profiles, the post-processing operations that most weaken their detectors may also
differ.

\paragraph{JPEG recompression.}

JPEG is already one of the ten evaluated operations. We examine it
separately because recompression is common when images are stored or
shared, and its strength can be varied systematically. We evaluate
qualities 90, 85, and 75; \Cref{tab:jpeg-sweep} reports the resulting
detection performance.

Using the earlier checkpoint described in \Cref{app:jpeg}, the held-out
test ROCs give TPRs of $96.6\%$ at Q90 and $93.3\%$ at Q85, falling to
$74.5\%$ at Q75, when read at $1\%$ FPR. The stricter 
$0.1\%$-FPR point is more fragile: TPR falls from $98.3\%$ natively 
to $91.9\%$ at Q90 and $27.8\%$ at Q85, remaining near $28\%$ at Q75. 
Importantly, AUROC at Q85 remains $0.9941$. Recompression therefore 
weakens extreme-tail sensitivity without eliminating the broader 
separation between clean and UnMarker-processed images.

The evaluated detector is trained jointly across a mixture
of ten post-processing operations, rather than optimized for a fixed recompression channel. Because 
matched JPEG-recompressed clean and attacked examples are inexpensive to 
generate, a deployment could train or fine-tune specifically at its 
expected quality levels; the high AUROC indicates that substantial 
separability remains available to such a detector. We have not evaluated 
this quality-specific intervention, so we present it as a plausible 
mitigation rather than a demonstrated fix. Detailed sweep results appear in \Cref{app:jpeg}.

\section{Forensic Signature Analysis}
\label{sec:signature}

All six evaluated removal output sets are distinguishable from their
clean references. We next compare their residual spectra and learned representations.

For five attacks, we compute pixel-domain residuals between source
images and their attack-processed outputs; for Boundary
Leakage~\cite{boundary_leakage}, we use the released
watermarked--attacked pairs. We compute their azimuthally averaged power
spectral density (PSD) and compare it with differences between
independent clean images from the same source distribution. We use 
$N=5{,}000$ comparisons per attack, except WiTS, where $N=2{,}000$, 
and Boundary Leakage, where $N=4{,}997$. The resulting log-ratio 
shows where an attack residual has more or less power than ordinary 
inter-image variation. It does not show that the attacked image itself 
gains or loses energy at those frequencies. The residual construction and two-dimensional map appear in
\Cref{app:spectral-methods}.

\subsection{Case study: UnMarker}
\label{sec:signature:unmarker}

UnMarker~\cite{unmarker_sp25} provides a useful case study because its
residual-to-reference profile differs from those of the other attacks
(\Cref{fig:radial-profiles}). Its curve lies above the inter-image
reference at lower frequencies and below it at higher frequencies.
Thus, the transformation residual has greater power than ordinary
inter-image differences in the former region and lower power in the
latter. 

This profile suggests a possible connection to the post-processing
results in \Cref{sec:results:robustness}. Bilateral filtering removes
fine texture while preserving edges~\cite{tomasi1998bilateral}, and JPEG quantization attenuates
mid-to-high-frequency detail~\cite{wallace1991jpeg}. UnMarker's residual-to-reference profile
also differs from the reference at higher frequencies. This overlap is
consistent with, but does not establish, the possibility that these
operations weaken detector-relevant high-frequency cues. At the
$0.1\%$ FPR operating point, this weakening is especially pronounced
for JPEG, with TPR falling sharply by Q85
(\Cref{tab:jpeg-sweep}).

\subsection{Spectral signatures across attack families}
\label{sec:signature:cross}

\begin{figure}[t]
  \centering
  \includegraphics[width=\linewidth]{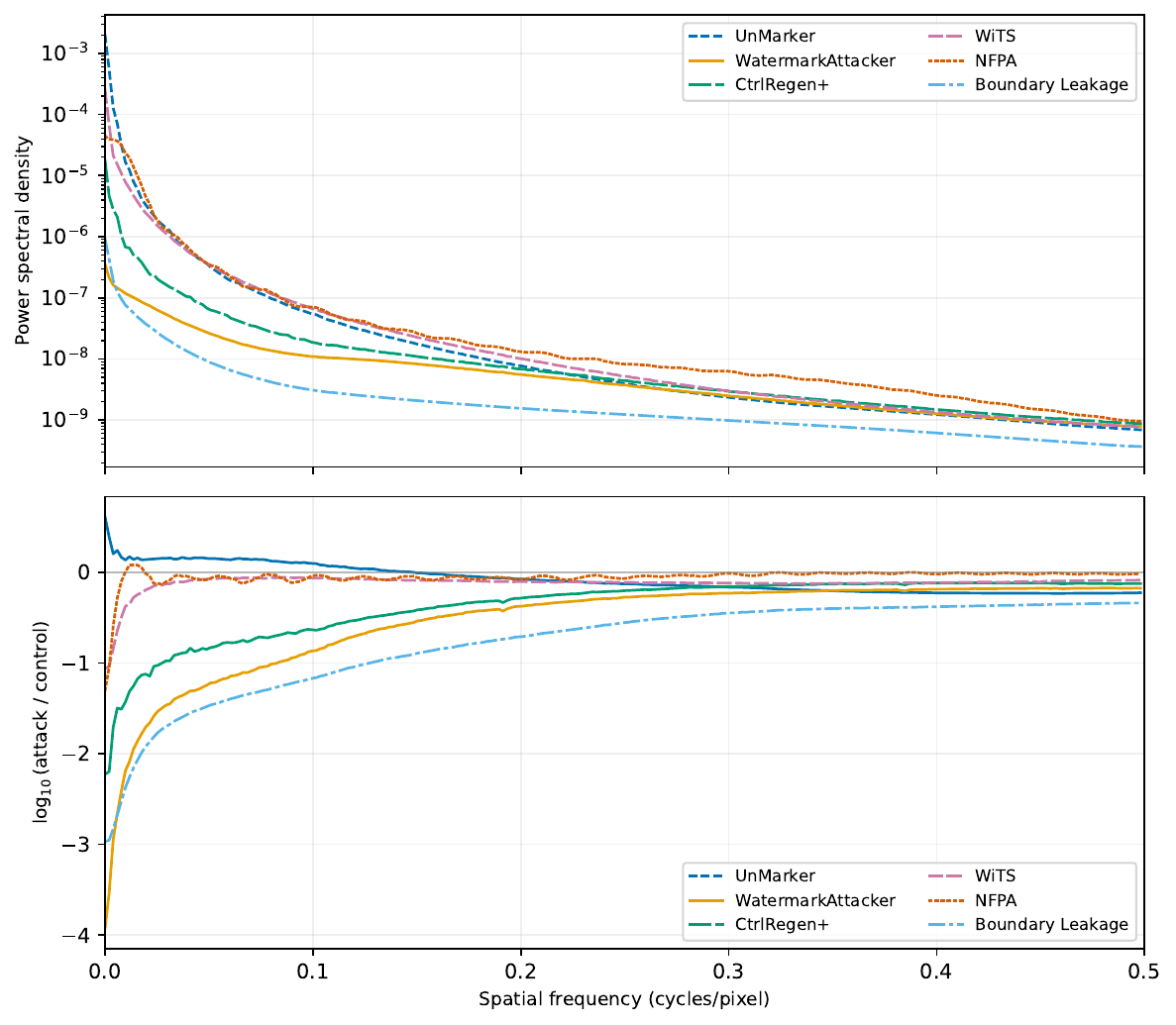}
  \caption{%
  \textbf{Cross-attack residual spectra.}
  For each attack, the figure shows the azimuthally averaged base-10
  log-ratio of its mean residual PSD to a source-matched inter-image
  reference. Positive and negative values indicate more or less residual
  power than the reference. Boundary Leakage starts from watermarked
  images, so its magnitude is not directly comparable to the others.}
  
  \label{fig:radial-profiles}
\end{figure}

\begin{figure}[t]
  \centering
  \includegraphics[width=0.8\linewidth]{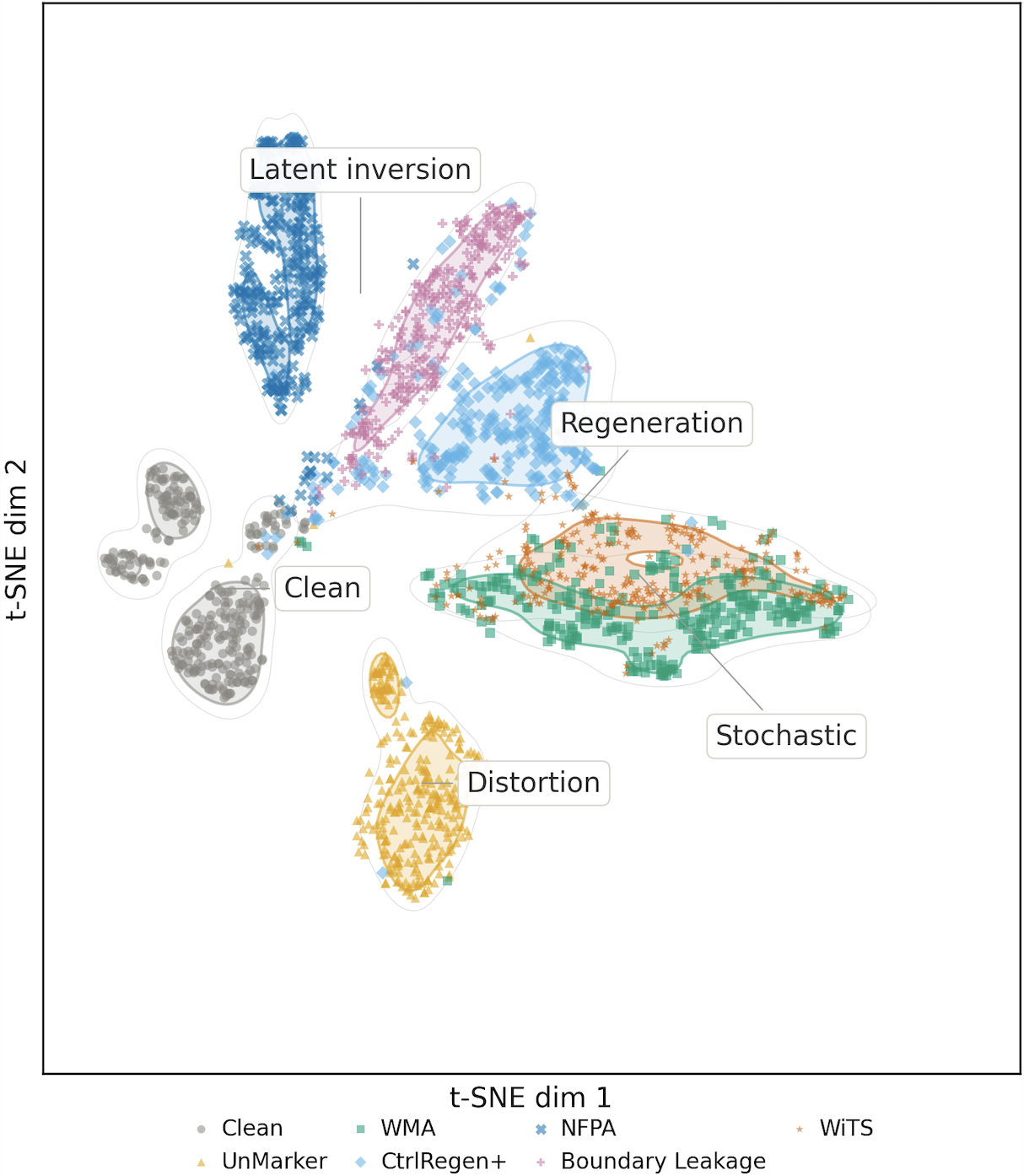}
  \caption{%
\textbf{Feature structure across the evaluated output sets.}
t-SNE~\cite{vandermaaten2008tsne} projection of pooled-detector
features from $2{,}100$ held-out images: 300 clean images and 300
outputs from each attack. The detector is trained only to distinguish
clean from processed images, yet its representation separates the
seven evaluated sets. Because some attacks use different source
populations, this visualization describes output-set structure rather
than isolating attack mechanism. The projection uses an earlier pooled checkpoint than the one reported in
\Cref{tab:unified-detection}; the checkpoints use the same training
configuration but different progenitor splits, and the figure is used only
as a qualitative visualization. Projection details appear in
\Cref{app:pooled-detector}.}
  \label{fig:tsne-pooled}
\end{figure}

UnMarker stands out among the six attacks because its radial
residual-to-reference curve crosses zero
(\Cref{fig:radial-profiles}), whereas the other five curves lie mostly below it. A zero log-ratio denotes equal mean power in the attack residual and its source-matched reference. Values above or below zero indicate that the attack residual has more or less power than that reference, respectively. They do not mean that the attacked image gains or loses energy at those
frequencies.

WatermarkAttacker and WiTS have similarly shaped profiles despite using
different removal procedures, while CtrlRegen+ and Boundary Leakage
produce different patterns. This shows that the curves can serve as
descriptive fingerprints of the evaluated outputs, but cannot identify
the mechanism that produced them. 

Radial averaging hides directional information. The two-dimensional
maps in \Cref{fig:2d-logratio} reveal axis-aligned patterns in several
attacks and a periodic grid for NFPA that is much less visible in its
radial profile. The radial and two-dimensional views therefore capture
complementary structure in the residuals. NFPA is nevertheless detected at $93.73\%$ TPR at the $0.1\%$ FPR
operating point (\Cref{tab:unified-detection}). Its comparatively
minute radial profile therefore shows that strong detectability
need not coincide with a large radial spectral deviation.

The pooled detector's features also separate the seven evaluated output
sets, despite being trained only for binary clean-versus-processed
classification (\Cref{fig:tsne-pooled}). The t-SNE projection is used only for visualization; it does not affect detector training, threshold selection, or any reported operating point. To quantify this structure,
we freeze the features and train a linear classifier to distinguish
clean images from the six attack output sets. The classifier reaches
$0.971$ balanced accuracy, compared with $0.600$ using generic
ImageNet-1K features and $0.143$ for random guessing. Thus, the
removal-trained representation contains substantially more information
about the evaluated output sets than generic image features
(\Cref{app:pooled-detector}).

Together, these analyses reveal substantial structure and differences  among the evaluated output sets in both residual spectra and learned  representations. We also evaluate a higher-order whitening attempt  designed to align spectral, residual, and bispectral statistics with a  clean reference. Its corrected optimized statistic instead moves from  $0.842$ to $1.093$, farther from the clean target, so the intended match  is not achieved. These outputs remain separable under the held-out four-detector battery, which achieves an AUROC of $1.000$  (\Cref{tab:stealth-attempts} in \Cref{app:stealth-attempts}).

\section{Toward Forensic Stealth}
\label{sec:discussion}
\label{sec:stealth}

A successful remover should leave an image that could have arisen from ordinary
processing of the same source. Silencing the watermark verifier is part of this
task, but it does not restore deniability when an investigator can still
recognize the removal process. Our experiments show how far present attacks are
from this goal. Among 750 outputs, only one removed the watermark, remained
within the fidelity budget, and evaded the attack-specific forensic detector.
Although this result does not rule out a better remover, it asks us to treat
stealth as part of removal itself and to understand what such a remover would
have to reproduce.

\subsection{When removal can hide in ordinary processing}
\label{sec:discussion_implications}

A remover has a natural place to hide when the same image operation also occurs
during ordinary processing. Recent work on PRC-style watermarking gives a
concrete example~\cite{francati2025codinglimits}. Cropping followed by resizing
erases the watermark while leaving the image visually close to the original.
The same work considers a model that first generates an artificial border. The
attacker crops away the border and resizes the image, preserving the central
content while the global resampling defeats watermark recovery.

Crop and resize can therefore remove the watermark without obvious damage to
the central image. A forensic detector might nevertheless recognize that the
operation occurred. Whether this trace reveals an attack depends on how images
are normally released. If legitimate images are commonly cropped and resized,
the operation alone cannot reveal why it was applied, and the remover can blend
with benign processing. The harder case arises when no ordinary operation both
removes the watermark and preserves the image.

\subsection{What a stealthy remover must reproduce}
\label{sec:discussion_stealth}

A clean source has more than one legitimate digital form. The same photograph
may be resized, compressed, adjusted in color, or slightly cropped before
release. These images differ at the pixel level, yet each remains an ordinary
version of the same photograph. An unrelated natural image may also look
benign, but returning it would abandon the content that the attacker is meant
to preserve.

Fix a clean source image $x$, and let $Q_x$ be the distribution of its
ordinary, non-attacked releases. Repeated applications of a randomized remover
$K$ to the watermarked image $x_w$ induce another distribution, denoted by
$\mathsf{Rem}_x^K$. Forensic stealth asks the remover to reproduce the kinds of
variation found in $Q_x$, in roughly the same proportions. Let $\pi$ be the source distribution shared by clean and removed releases. Total variation measures the largest difference assigned to the same image set, and we use the stronger, information-theoretic
sufficient condition
\[
  \mathbb{E}_{x\sim\pi}
  \left[\Delta(\mathsf{Rem}_x^K,Q_x)\right]\leq\varepsilon.
\]
When clean and removed releases use the same source distribution $\pi$,
averaging $Q_x$ and $\mathsf{Rem}_x^K$ over $x\sim\pi$ yields
$\mathbb{P}_{\mathrm{clean}}$ and $\mathbb{P}_{\mathrm{rem}}^K$,
respectively. By convexity, the average conditional criterion also bounds
their total-variation distance and therefore the global distinguishing
advantage in the threat model. Under this condition, the acceptance
probabilities of any image-only test differ by at most $\varepsilon$
between removed and ordinary releases. If $\varepsilon$ is negligible
in the security parameter, this implies the
computational-indistinguishability target. Because
$Q_x$ is tied to the source, a clean image from somewhere else is not an
acceptable substitute.

A search could compare candidate images through the engineering objective
\[
 S_x(y)=\lambda_{\mathrm{fid}}F_x(y)
       -\lambda_{\mathrm{wm}}W(y)
       +\lambda_{\mathrm{ben}}B_x(y).
\]
Here $F_x$ rewards preservation of the source, $W$ measures the watermark
evidence that remains, and $B_x$ rewards resemblance to ordinary releases of
$x$. As a pointwise search heuristic, $S_x$ may concentrate outputs on a
narrow set and does not replace distributional evaluation against $Q_x$. Whatever value $S_x$ takes, the final experiment applies every gate to the
same image, which still fails if the watermark remains detectable or the
removal trace remains obvious. The baseline attacker cannot compute $W$ by
querying the secret verifier. In this expression, $W$ records the desired
absence of watermark evidence. Verifier queries or feedback from a learned
watermark proxy would give the attacker more power than our baseline model
allows.

An ordinary transformation can do both jobs when it weakens the watermark and
also belongs to the release process represented by $Q_x$. If no such
transformation is available, the remover must introduce plausible variation of
its own, perhaps through a conditional editor or a stochastic search among
nearby images. WiTS performs such a search by making a sequence of small random
edits under a quality oracle~\cite{zhang2024watermarks}, but the random walk has
no model of the clean distribution toward which its candidates should move.
We therefore added forensic scores and asked whether they could steer the walk
toward outputs that are less recognizable as attacks.

In a three-way optimization search, we use WiTS to determine which candidate images are ever produced, while
the forensic surrogates help decide which of those candidates is worth keeping.
The distinction matters when the search fails, because no ranking rule can
recover an image that the random walk never visits, and a good proposal may
still be lost when the surrogates rank it poorly.

At each of 200 steps, WiTS produces three nearby proposals. The original WiTS
quality condition keeps a proposal only when its HPSv2 score is at least that
of the watermarked input~\cite{wu2023human}. Among the eligible proposals,
tri-search combines HPSv2 with a forensic benignity score. Two attacker-trained
surrogates supply this score, namely a
ViT-S/16~\cite{dosovitskiy2021image} and a classifier using radial-spectrum
features. Neither surrogate is conditioned on the source, so they judge whether
a candidate resembles attack outputs in general without explicitly modeling
ordinary variation for that particular source. The search saves checkpoints
and finally selects the one with the best surrogate benignity score among those
that satisfy $\mathrm{LPIPS}\leq0.35$.

The search has no direct signal for watermark removal, so the accumulated WiTS
edits must weaken the watermark on their own. It never queries the secret
verifier or the final cross-architecture ResNet-50 detector. These tests are
applied only after tri-search has selected its output, while candidate
selection relies on the two attacker-controlled surrogates.

All 30 selected outputs stayed within the LPIPS fidelity budget. The verifier
rejected 26, but the unqueried ResNet-50 caught every one of them at its $1\%$
FPR threshold. The remaining four evaded ResNet-50, but the verifier still
detected their watermarks, leaving no image that passed all three gates.

WiTS may never have generated a suitable candidate, leaving the ranking rule
nothing to recover. Even if such a candidate was present, the source-agnostic
surrogates may have ranked it poorly, or their scores may not have transferred
to the unqueried ResNet-50. A larger search budget might also have produced a
different result. Because the verifier and final detector were applied only to
the selected outputs, we cannot tell which explanation is right. We therefore
separate the failure of this search from the more basic question of whether the
three requirements can be met at all. We consider this possibility next through an idealized construction of a stealthy remover.

\subsection{Why the three goals can coexist}
\label{sec:stealth:boundary}

Suppose that a source has several legitimate clean versions, all preserving
the content that matters for the application. A remover might replace the
watermark-bearing detail with variation from this family while staying inside
the distortion budget. Its outputs would then follow the ordinary clean
variation of the source instead of forming a separate removal distribution
that a detector could learn.

Represent an image as $(s,c)$, where $s$ contains the source information
that must be preserved at the application's required fidelity level and
$c$ contains only detail that may be replaced within the distortion
budget. Let $P$ be the clean
distribution of $s$, and let $R_s$ be the clean distribution of detail for
that content. In this simplified population model, $R_s$ describes ordinary
variation after the source content has been fixed. Clean images follow the
distribution
\[
  \nu(ds,dc)=P(ds)R_s(dc).
\]
For every watermark key $k\in\mathcal K$, let $M_{k,s}$ describe the
watermarked detail conditioned on $s$. We assume that watermarking leaves the
content marginal equal to $P$, and hence
\[
  \mu_k(ds,dc)=P(ds)M_{k,s}(dc).
\]

For a photograph, $s$ may contain the objects and scene structure that give the
source its identity, while $c$ collects the smaller choices through which the
same source can appear in different digital forms. The appropriate division
depends on the application and its distortion measure, since these determine
what the remover is allowed to change.

We consider one efficient public randomized rule $T$ that is independent of
the watermark key and works for every relevant key. Requiring one common rule
is important because a keyless attacker cannot choose a different procedure
after learning the secret key. Given the preserved content $s$ and the current
detail $c$, the rule produces a replacement according to $T_s(c,\cdot)$. Every
replacement that it may produce must stay within the distortion budget with
probability one. After the rule is applied to watermarked details generated
under any relevant key, the resulting details must follow exactly the clean
conditional law $R_s$. For every relevant key $k$ and state $s$, every detail $c$ in the
support of $M_{k,s}$, and every measurable set $A$ of details, these
requirements are
\[
\begin{aligned}
T_s\!\left(c,\{c' \mid d((s,c),(s,c'))\leq\tau\}\right)&=1,\\
\int M_{k,s}(dc)\,T_s(c,A)&=R_s(A).
\end{aligned}
\]
Allowing $T$ to depend on the original detail $c$ lets it choose a nearby clean
replacement instead of an independent version that may exceed the distortion
budget. The following proposition shows that these properties are sufficient
for exact population stealth.

\begin{samepage}
\begin{proposition}[Stealth by conditional resampling]
\label{prop:clean-fiber}
Assume that the content-detail representation and the replacement rule above
can be implemented efficiently. Let $K$ preserve $s$ and draw $C'$ from
$T_s(c,\cdot)$. Then, for every $k\in\mathcal K$, the output remains within the
distortion budget with probability one and
\[
  K\mu_k=\nu.
\]
Writing
\[
  \beta_k=
  \Pr_{Y\leftarrow\nu}
  \left[\mathsf{Verify}(Y;k)=1\right]
\]
for the verifier's false-positive probability on clean images, the verifier
rejects the remover's output with probability $1-\beta_k$.
\end{proposition}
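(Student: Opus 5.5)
The argument is a direct computation with Markov kernels: check the budget constraint pointwise, then compute the output law by integrating the kernel against $\mu_k$.

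\emph{Step 1: the budget.} Fix $k\in\mathcal K$ and let $(S,C)\sim\mu_k=P(ds)M_{k,s}(dc)$. The remover outputs $(S,C')$ with $C'\sim T_S(C,\cdot)$, so the joint law is
\[
  \Gamma_k(ds,dc,dc')=P(ds)M_{k,s}(dc)T_s(c,dc').
\]
The set $B=\{(s,c,c') : d((s,c),(s,c'))\le\tau\}$ is assumed measurable. For $P$-a.e.\ $s$ and $M_{k,s}$-a.e.\ $c$, the first displayed requirement gives that the $c'$-section of $B$ has $T_s(c,\cdot)$-mass one. Fubini--Tonelli then gives $\Gamma_k(B)=1$. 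This uses only that $c$ lies in the support of $M_{k,s}$, which is exactly where the requirement is imposed.

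\emph{Step 2: the output law.} For a measurable rectangle $E\times A$,
\[
  (K\mu_k)(E\times A)=\int_E P(ds)\int M_{k,s}(dc)\,T_s(c,A)=\int_E P(ds)R_s(A)=\nu(E\times A),
\]
where the middle equality is the second displayed requirement. Rectangles form a $\pi$-system that generates the product $\sigma$-algebra, and both sides are probability measures. The $\pi$--$\lambda$ theorem therefore gives $K\mu_k=\nu$.

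\emph{Step 3: the verifier.} The released image is a sample $Y\sim K\mu_k=\nu$. Since $\mathsf{Verify}(\cdot;k)$ is a fixed measurable test (internal randomness is averaged in the same way), we have $\Pr[\mathsf{Verify}(Y;k)=1]=\beta_k$. So the verifier rejects with probability $1-\beta_k$.

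\emph{Efficiency and stealth.} The efficiency hypothesis is needed only so that $K$ is an admissible efficient, key-independent remover. Equality of distributions then gives zero advantage in the forensic-stealth experiment, and this holds even for unbounded distinguishers with any number of samples.

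\emph{Main obstacle.} The only delicate point is measure-theoretic. The requirements are stated only for $c$ in the support of $M_{k,s}$, and Step 2 must treat $s\mapsto M_{k,s}(\cdot)$, $s\mapsto R_s(\cdot)$ and $(s,c)\mapsto T_s(c,A)$ as measurable kernels. I would state these regularity conditions explicitly, treating "support" as "$M_{k,s}$-almost every $c$". Once that is done, the integral identity is immediate.
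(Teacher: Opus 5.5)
Your proposal is correct and takes the same route as the paper's proof. You integrate $T_s(c,\cdot)$ against $M_{k,s}$ using the second requirement to get $(K\mu_k)(ds,dc')=P(ds)R_s(dc')=\nu(ds,dc')$, use the first requirement for the distortion guarantee, and read off the rejection probability $1-\beta_k$ from the clean law. Your $\pi$--$\lambda$ extension from rectangles, the Fubini argument for the budget set, and the reading of ``support'' as ``$M_{k,s}$-almost every $c$'' spell out measure-theoretic regularity that the paper leaves implicit.
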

\end{samepage}

\begin{proof}
Fix a watermark key $k$. The content of a watermarked input is distributed
according to $P$, and the remover leaves this content unchanged. For a fixed
value of $s$, the second property of $T$ gives
\[
  \int M_{k,s}(dc)\,T_s(c,dc')=R_s(dc').
\]
The output distribution is consequently
\[
  (K\mu_k)(ds,dc')
  =P(ds)R_s(dc')
  =\nu(ds,dc').
\]
The first property of $T$ gives the distortion guarantee. Since the output has
the clean distribution, the verifier accepts with probability $\beta_k$ and
rejects with probability $1-\beta_k$.
\end{proof}

Because $K\mu_k=\nu$, an image-only test sees exactly the same distribution on
removed outputs and clean data, even if it is computationally unbounded and
knows the watermark key. Some outputs may still be flagged, but over random population draws this happens at the test's clean false-positive rate. The keyed verifier
likewise accepts with probability $\beta_k$.

A simple idealized model shows that these conditions can hold. Let the detail decompose as $c=(h,z)$, where $h$ is benign detail and $z\in\{0,1\}^r$ is a replaceable component that may carry the watermark. Let $R_s=H_s\otimes U_r$, where $H_s$ is the clean law of $h$ and $U_r$ is uniform on $\{0,1\}^r$. Suppose every $M_{k,s}$ has $h$-marginal $H_s$, while the conditional law of $z$ may depend on $h$ and the key.
Assume also that
\[
d\bigl((s,(h,z)),(s,(h,z'))\bigr)\leq\tau
\]
for every $h,z,z'$. Define $T_s$ to preserve $h$ and replace $z$ with an independent draw from $U_r$. This rule is efficient, public, and key-independent. It respects the distortion budget, preserves the $H_s$ marginal, and makes the new $z$ independent and uniform. Its output law is
therefore $H_s\otimes U_r=R_s$ for every key. Thus, the proposition's assumptions are jointly satisfiable in this model.

We do not know how to construct $T$ for Stable Diffusion or for the watermark
systems we evaluate. Such a rule would need to determine which content must
remain fixed and learn the clean variation appropriate to that content, so
that it can sample within a tight distortion budget without knowing the key.
An efficient, key-independent sampler with these properties would meet all
three requirements, and constructing one remains open.

\section{Related Work}
\label{sec:related}

This paper draws on three adjacent literatures: watermarking and provenance for AI-generated images, attacks that remove these signals, and forensic analysis of image-processing pipelines. Our focus is the output after watermark removal, where these threads intersect.

\paragraph{Watermarking, provenance, and formal guarantees.}
Research on watermarking and provenance asks how provenance signals are embedded, how well they survive transformations, and what guarantees these systems can provide. Classical post-generation methods include DWT--DCT--SVD
watermarking~\cite{dwtdctsvd} and learned encoders such as
HiDDeN~\cite{hiddenwm_eccv18}. Later methods embed the signal during generation, as in Tree-Ring~\cite{treering_neurips23}, Gaussian
Shading~\cite{gaussianshading_cvpr24}, and Stable
Signature~\cite{stablesig_iccv23}. More recent systems broaden the
design space further, including ZoDiac~\cite{zodiac_neurips24},
ROBIN~\cite{robin_neurips24}, RAW~\cite{raw_neurips24},
SEAL~\cite{seal_iccv25}, InvisMark~\cite{invismark_wacv25}, and
TrustMark~\cite{trustmark_iccv25}. A complementary line studies
robustness under stronger transformations, including
instruction-driven editing in Robust-Wide~\cite{robustwide_eccv24} and
broader editing and regeneration in VINE and
W-Bench~\cite{vine_iclr25}.

Aaronson~\cite{Aaronson22} gave an early account of watermarking model
outputs using pseudorandom signals embedded during sampling. Later
formal work develops this view through pseudorandom coding,
computational undetectability, and coding
limits~\cite{C:ChrGun24,gunn2025undetectable,francati2025codinglimits}.
Recent cryptanalysis shows that the practical LDPC-PRC instantiation can
be distinguished at a cost of approximately $2^{22}$ operations, limiting
its concrete security under feasible parameters
\cite{wang_ldpc_prc_usenix26}. Related information-theoretic work studies
exact distribution matching through perfectly secure steganography and
coupling-based constructions
\cite{wang2008perfectly,schroederdewitt2023perfectly}. At the deployment
level, SynthID-Image and C2PA represent watermark-based and
signed-metadata-based approaches to
provenance~\cite{synthid_image_arxiv25,c2pa_spec}. Our question begins
after removal: what forensic evidence does the removal process itself
leave behind?

\paragraph{Watermark removal attacks and their limits.}
The six attacks we evaluate span
four families: distortion-based optimization~\cite{unmarker_sp25},
diffusion-based regeneration~\cite{watermarkattacker_neurips24,
ctrlregen_iclr25}, latent-space inversion and
perturbation~\cite{nfpa_attack,boundary_leakage}, and stochastic
erosion~\cite{zhang2024watermarks}. Other attacks include detector optimization in WAVES~\cite{waves_arxiv24}, single-image Deep Image Prior
removal~\cite{dip_watermark_tmlr25}, the winning NeurIPS black- and
beige-box challenge pipeline~\cite{neurips_removal_challenge25},
black-box semantic forgery and removal~\cite{blackbox_forgery_cvpr25},
MarkSweep's frequency-aware denoising~\cite{marksweep_arxiv26}, and
TrustMark's ReMark remover~\cite{trustmark_iccv25}.

Zhang et al.~\cite{zhang2024watermarks} prove that strong
watermarking is impossible under stated assumptions on quality and
perturbation oracles; their constructive attack, WiTS, is one of the
six removers we evaluate. Their result addresses watermark removal
while preserving output quality, but not whether the resulting images
resemble ordinary clean releases.

Independent and concurrent work by Evennou and Kijak studies the same
phenomenon under the name Watermark Removal Detection and concludes that removal can leave a recognizable forensic trace~\cite{forensic_cost_arxiv26}. Their study provides further evidence across dedicated attacks and image-editing pipelines. Our work gives a complementary security treatment for keyed PRC watermarking: we formalize stealth as computational indistinguishability from a specified clean-release distribution and report the exact same-image conjunction of verifier
evasion, source fidelity, and detector evasion. We additionally evaluate benign-processing controls and adaptive detector evasion followed by fresh-arbiter retraining, and investigate whether a remover can be designed to meet this stronger objective. 

Recent work by Cao et al.~\cite{marknull_usenix26}, presented at USENIX Security 2026, introduces MarkNull, a model-agnostic latent-space removal attack, and MarkNull-A,
a learned variant that performs the transformation in a single forward
pass. The paper also evaluates a separate reconstruction-based detector:
it inverts and regenerates a query image, then treats unusually large
perceptual reconstruction error as evidence of attack. This experiment
distinguishes untouched watermarked images from outputs of MarkNull and
MarkNull-A. The attacks themselves are optimized for watermark evasion
and similarity to the input; forensic detectability is not part of their
attack-success criterion. Our evaluation
instead requires watermark evasion, quality preservation, and forensic
evasion to hold jointly.

\paragraph{Image forensics and learned pipeline fingerprints.}
Image forensics studies the statistical traces left by image-processing
and generation pipelines. Bayar and Stamm~\cite{bayar_stamm_ihmms16}
show that detectors can learn traces of several image manipulations
directly from data, without a hand-designed feature for each operation.
Marra et al.~\cite{marra2019gans} identify model-specific fingerprints
in GAN outputs, while Wang et al.~\cite{wang2020cnn} show that a detector
trained on one generator can transfer to previously unseen generators.
Frank et al.~\cite{frank2020leveraging} and Durall et
al.~\cite{durall2020watch} trace many of these signals to spectral
artifacts introduced by upsampling. Corvi et
al.~\cite{corvi_cvprw23} extend this spectral analysis to diffusion
models and find that they also leave distinct Fourier, radial, and
angular patterns. Ojha et al.~\cite{ojha2023universal} further improve
transfer across generator families using pretrained features.

Together, these works provide the forensic foundation for our study.
These studies mainly compare generated images with real ones, or
manipulated images with untouched ones. We instead compare
removal-processed images with ordinary clean releases and ask whether
the resulting signatures reveal the mechanism used for removal.

\section{Conclusion}
\label{sec:conclusion}

Watermark removal is usually judged by verifier evasion and image
quality. Our results show that these criteria do not restore deniability:
across the six removers we study, outputs remain highly distinguishable
from clean content, and across the five executable attacks, only $1/750$
outputs in our exact same-image evaluation satisfy removal, source
fidelity, and detector evasion simultaneously. This does not prove forensic stealth impossible, but shows that evading a watermark verifier or fixed forensic detector is insufficient. We therefore argue that removal should be evaluated against the clean-release distribution itself, and provide both a practical objective for pursuing this goal and an idealized model satisfying sufficient conditions for exact population stealth. Ultimately, successful removal requires not only erasing the watermark, but making the image indistinguishable from an ordinary clean release.

\ifconf\else
  \pdfbookmark[1]{Ethical Considerations}{ethical-considerations}
\fi
\section*{Ethical Considerations}

We study detection of watermark-removal outputs and attempts to evade
those detectors. Five attacks have public implementations; Boundary
Leakage is evaluated from author-released outputs. The same-image
tri-search yields no three-gate successes in 30 trials. Other adaptive
ablations measure detector evasion and retraining, not complete
three-gate success. No human subjects are involved, and all data are
public or generated from public prompts and models. Detector-evasion
methods may aid provenance laundering or disinformation, harming
creators, platforms, watermark providers, and users who rely on
provenance signals. We report these methods because reproducible
adaptive evaluation is necessary to assess the security claim, but
withhold secret watermark keys to limit exact reproduction of
key-dependent attacks. This reduces but does not eliminate misuse.

Because benign processing may resemble removal, false accusation is a
central deployment risk. We evaluate family-specific benign controls
(\Cref{sec:results:validations}); detector scores should be treated as
screening signals, not as proof of malicious removal or an image's
provenance.

\ifconf\else
  \pdfbookmark[1]{Open Science}{open-science}
\fi
\section*{Open Science}

\ifconf
The anonymous artifact is available at
\url{https://anonymous.4open.science/r/removal-is-not-stealth-1000/}.
\else
The artifact is available at
\url{https://anonymous.4open.science/r/removal-is-not-stealth-1000/}.
\fi
It contains manifests, splits, checkpoints, per-image scores, evaluation
scripts, and parameters for five public attacks, plus Boundary Leakage output
provenance.

We withhold watermark keys but release verifier outcomes. Generated datasets
include prompts, model identifiers, sampler settings, and the pinned Stable
Diffusion~2.1 mirror revision. Scores support metric recomputation and
checkpoints support inference on reported splits. Because no global seed was
fixed, exact checkpoint reproduction is not claimed; withheld keys prevent
regenerating key-dependent verifier outcomes.

\ifconf\else
  \pdfbookmark[1]{Generative AI Usage}{generative-ai-usage}
  \section*{Generative AI Usage}
  We used Claude (Anthropic) and ChatGPT (OpenAI) for language editing and
  to draft portions of experimental code. The authors reviewed all
  suggestions, ran the experiments end-to-end, and take responsibility for
  all claims, results, and references.
\fi

\begingroup
\ifconf\else\sloppy\fi
\ifconf\else
  \pdfbookmark[1]{References}{references}
\fi
\bibliographystyle{plainurl}
\bibliography{references}

@inproceedings{unmarker_sp25,
  title     = {{UnMarker}: A Universal Attack on Defensive Image Watermarking},
  author    = {Kassis, Andre and Hengartner, Urs},
  booktitle = {2025 IEEE Symposium on Security and Privacy (SP)},
  pages     = {2602--2620},
  publisher = {IEEE},
  address   = {San Francisco, CA, USA},
  year      = {2025},
  doi       = {10.1109/SP61157.2025.00005},
  note      = {arXiv:2405.08363}
}

@inproceedings{watermarkattacker_neurips24,
  title     = {Invisible Image Watermarks Are Provably Removable Using Generative {AI}},
  author    = {Zhao, Xuandong and Zhang, Kexun and Su, Zihao and Vasan, Saastha and Grishchenko, Ilya and Kruegel, Christopher and Vigna, Giovanni and Wang, Yu-Xiang and Li, Lei},
  booktitle = {Advances in Neural Information Processing Systems (NeurIPS)},
  pages     = {8643--8672},
  address   = {Vancouver, BC, Canada},
  year      = {2024},
  doi       = {10.52202/079017-0276},
  note      = {arXiv:2306.01953}
}

@inproceedings{robin_neurips24,
  title     = {{ROBIN}: Robust and Invisible Watermarks for Diffusion Models with Adversarial Optimization},
  author    = {Huang, Huayang and Wu, Yu and Wang, Qian},
  booktitle = {Advances in Neural Information Processing Systems (NeurIPS)},
  pages     = {3937--3963},
  year      = {2024},
  doi       = {10.52202/079017-0129},
  url       = {https://proceedings.neurips.cc/paper_files/paper/2024/hash/073c8584ef86bee26fe9d639ec648e28-Abstract-Conference.html},
  note      = {arXiv:2411.03862}
}

@inproceedings{zodiac_neurips24,
  title     = {Attack-Resilient Image Watermarking Using Stable Diffusion},
  author    = {Zhang, Lijun and Liu, Xiao and {Viros Martin}, Antoni and {Xiong Bearfield}, Cindy and Brun, Yuriy and Guan, Hui},
  booktitle = {Advances in Neural Information Processing Systems (NeurIPS)},
  pages     = {38480--38507},
  year      = {2024},
  doi       = {10.52202/079017-1215},
  url       = {https://proceedings.neurips.cc/paper_files/paper/2024/hash/43d33182360378d5c8e69dd706c24f2f-Abstract-Conference.html},
  note      = {arXiv:2401.04247}
}

@inproceedings{ctrlregen_iclr25,
  title     = {Image Watermarks Are Removable Using Controllable Regeneration from Clean Noise},
  author    = {Liu, Yepeng and Song, Yiren and Ci, Hai and Zhang, Yu and Wang, Haofan and Shou, Mike Zheng and Bu, Yuheng},
  booktitle = {International Conference on Learning Representations (ICLR)},
  address   = {Singapore},
  numpages  = {18},
  year      = {2025},
  url       = {https://openreview.net/forum?id=mDKxlfraAn},
  note      = {arXiv:2410.05470}
}

@inproceedings{nfpa_attack,
  title     = {The Future Unmarked: Watermark Removal in {AI}-Generated Images via Next-Frame Prediction},
  author    = {Qiu, Huming and Wang, Zhaoxiang and Zhang, Mi and Zhang, Xiaohan and You, Xiaoyu and Yang, Min},
  booktitle = {Advances in Neural Information Processing Systems},
  volume    = {38},
  pages     = {138160--138185},
  year      = {2025},
  doi       = {10.52202/085713-4154},
  url       = {https://proceedings.neurips.cc/paper_files/paper/2025/hash/b48405307576689bea14d221ccba73ae-Abstract-Conference.html}
}

@inproceedings{boundary_leakage,
  title     = {Removal Attack and Defense on {AI}-generated Content Latent-based Watermarking},
  author    = {Lee, De Zhang and Fang, Han and Wang, Hanyi and Chang, Ee-Chien},
  booktitle = {Proceedings of the 2025 ACM SIGSAC Conference on Computer and Communications Security},
  pages     = {2174--2188},
  publisher = {Association for Computing Machinery},
  address   = {New York, NY, USA},
  year      = {2025},
  doi       = {10.1145/3719027.3765175},
  note      = {arXiv:2509.11745}
}

@inproceedings{blackbox_forgery_cvpr25,
  title     = {Black-Box Forgery Attacks on Semantic Watermarks for Diffusion Models},
  author    = {M{\"u}ller, Andreas and Lukovnikov, Denis and Thietke, Jonas and Fischer, Asja and Quiring, Erwin},
  booktitle = {2025 IEEE/CVF Conference on Computer Vision and Pattern Recognition (CVPR)},
  pages     = {20937--20946},
  publisher = {IEEE},
  address   = {Nashville, TN, USA},
  year      = {2025},
  url       = {https://openaccess.thecvf.com/content/CVPR2025/html/Muller_Black-Box_Forgery_Attacks_on_Semantic_Watermarks_for_Diffusion_Models_CVPR_2025_paper.html},
  note      = {arXiv:2412.03283}
}

@inproceedings{waves_arxiv24,
  title     = {{WAVES}: Benchmarking the Robustness of Image Watermarks},
  author    = {An, Bang and Ding, Mucong and Rabbani, Tahseen and Agrawal, Aakriti and Xu, Yuancheng and Deng, Chenghao and Zhu, Sicheng and Mohamed, Abdirisak and Wen, Yuxin and Goldstein, Tom and Huang, Furong},
  booktitle = {Proceedings of the 41st International Conference on Machine Learning},
  series    = {Proceedings of Machine Learning Research},
  volume    = {235},
  pages     = {1456--1492},
  publisher = {PMLR},
  address   = {Vienna, Austria},
  year      = {2024},
  url       = {https://proceedings.mlr.press/v235/an24a.html}
}

@inproceedings{hiddenwm_eccv18,
  title     = {{HiDDeN}: Hiding Data with Deep Networks},
  author    = {Zhu, Jiren and Kaplan, Russell and Johnson, Justin and Fei-Fei, Li},
  booktitle = {Proceedings of the European Conference on Computer Vision (ECCV)},
  pages     = {682--697},
  publisher = {Springer},
  address   = {Cham, Switzerland},
  year      = {2018},
  doi       = {10.1007/978-3-030-01267-0_40}
}

@inproceedings{dwtdctsvd,
  title     = {{DWT}-{DCT}-{SVD} based watermarking},
  author    = {Navas, K. A. and Ajay, Mathews Cheriyan and Lekshmi, M. and Archana, Tampy S. and Sasikumar, M.},
  booktitle = {2008 3rd International Conference on Communication Systems Software and Middleware and Workshops (COMSWARE '08)},
  pages     = {271--274},
  publisher = {IEEE},
  address   = {Bangalore, India},
  year      = {2008},
  doi       = {10.1109/COMSWA.2008.4554423}
}

@inproceedings{treering_neurips23,
  title     = {Tree-Rings Watermarks: Invisible Fingerprints for Diffusion Images},
  author    = {Wen, Yuxin and Kirchenbauer, John and Geiping, Jonas and Goldstein, Tom},
  booktitle = {Advances in Neural Information Processing Systems (NeurIPS)},
  address   = {New Orleans, LA, USA},
  numpages  = {17},
  year      = {2023},
  url       = {https://proceedings.neurips.cc/paper_files/paper/2023/hash/b54d1757c190ba20dbc4f9e4a2f54149-Abstract-Conference.html}
}

@inproceedings{gaussianshading_cvpr24,
  title     = {Gaussian Shading: Provable Performance-Lossless Image Watermarking for Diffusion Models},
  author    = {Yang, Zijin and Zeng, Kai and Chen, Kejiang and Fang, Han and Zhang, Weiming and Yu, Nenghai},
  booktitle = {2024 IEEE/CVF Conference on Computer Vision and Pattern Recognition (CVPR)},
  pages     = {12162--12171},
  publisher = {IEEE},
  address   = {Seattle, WA, USA},
  year      = {2024},
  doi       = {10.1109/CVPR52733.2024.01156}
}

@inproceedings{stablesig_iccv23,
  title     = {The Stable Signature: Rooting Watermarks in Latent Diffusion Models},
  author    = {Fernandez, Pierre and Couairon, Guillaume and J{\'e}gou, Herv{\'e} and Douze, Matthijs and Furon, Teddy},
  booktitle = {2023 IEEE/CVF International Conference on Computer Vision (ICCV)},
  pages     = {22409--22420},
  publisher = {IEEE},
  address   = {Paris, France},
  year      = {2023},
  doi       = {10.1109/ICCV51070.2023.02053}
}

@inproceedings{invismark_wacv25,
  title     = {{InvisMark}: Invisible and Robust Watermarking for {AI}-Generated Image Provenance},
  author    = {Xu, Rui and Hu, Mengya and Lei, Deren and Li, Yaxi and Lowe, David and Gorevski, Alex and Wang, Mingyu and Ching, Emily and Deng, Alex},
  booktitle = {Proceedings of the Winter Conference on Applications of Computer Vision (WACV)},
  pages     = {909--918},
  year      = {2025},
  doi       = {10.1109/WACV61041.2025.00098},
  url       = {https://openaccess.thecvf.com/content/WACV2025/html/Xu_InvisMark_Invisible_and_Robust_Watermarking_for_AI-Generated_Image_Provenance_WACV_2025_paper.html},
  note      = {arXiv:2411.07795}
}

@inproceedings{trustmark_iccv25,
  title     = {{TrustMark}: Robust Watermarking and Watermark Removal for Arbitrary Resolution Images},
  author    = {Bui, Tu and Agarwal, Shruti and Collomosse, John},
  booktitle = {Proceedings of the IEEE/CVF International Conference on Computer Vision (ICCV)},
  pages     = {18629--18639},
  year      = {2025},
  url       = {https://openaccess.thecvf.com/content/ICCV2025/html/Bui_TrustMark_Robust_Watermarking_and_Watermark_Removal_for_Arbitrary_Resolution_Images_ICCV_2025_paper.html}
}

@inproceedings{raw_neurips24,
  title     = {{RAW}: A Robust and Agile Plug-and-Play Watermark Framework for {AI}-Generated Images with Provable Guarantees},
  author    = {Xian, Xun and Wang, Ganghua and Bi, Xuan and Srinivasa, Jayanth and Kundu, Ashish and Hong, Mingyi and Ding, Jie},
  booktitle = {Advances in Neural Information Processing Systems (NeurIPS)},
  pages     = {132077--132105},
  publisher = {Neural Information Processing Systems Foundation, Inc.},
  year      = {2024},
  doi       = {10.52202/079017-4198},
  url       = {https://proceedings.neurips.cc/paper_files/paper/2024/hash/ee62ab636066cf45a27246acca9545b7-Abstract-Conference.html},
  note      = {arXiv:2403.18774}
}

@inproceedings{seal_iccv25,
  title     = {{SEAL}: Semantic Aware Image Watermarking},
  author    = {Arabi, Kasra and Witter, R. Teal and Hegde, Chinmay and Cohen, Niv},
  booktitle = {Proceedings of the IEEE/CVF International Conference on Computer Vision (ICCV)},
  pages     = {16196--16205},
  year      = {2025},
  url       = {https://openaccess.thecvf.com/content/ICCV2025/html/Arabi_SEAL_Semantic_Aware_Image_Watermarking_ICCV_2025_paper.html},
  note      = {arXiv:2503.12172}
}

@misc{synthid_image_arxiv25,
  title     = {{SynthID-Image}: Image watermarking at internet scale},
  author    = {Gowal, Sven and Bunel, Rudy and Stimberg, Florian and Stutz, David and Ortiz-Jimenez, Guillermo and Kouridi, Christina and Vecerik, Mel and Hayes, Jamie and Rebuffi, Sylvestre-Alvise and Bernard, Paul and Gamble, Chris and Horv{\'a}th, Mikl{\'o}s Z. and Kaczmarczyck, Fabian and Kaskasoli, Alex and Petrov, Aleksandar and Shumailov, Ilia and Thotakuri, Meghana and Wiles, Olivia and Yung, Jessica and Ahmed, Zahra and Martin, Victor and Rosen, Simon and Sav\v{c}ak, Christopher and Senoner, Armin and Vyas, Nidhi and Kohli, Pushmeet},
  year      = {2025},
  doi       = {10.48550/arXiv.2510.09263},
  url       = {https://arxiv.org/abs/2510.09263},
  note      = {arXiv:2510.09263}
}

@inproceedings{robustwide_eccv24,
  title     = {{Robust-Wide}: Robust Watermarking against Instruction-Driven Image Editing},
  author    = {Hu, Runyi and Zhang, Jie and Xu, Ting and Li, Jiwei and Zhang, Tianwei},
  booktitle = {Proceedings of the European Conference on Computer Vision (ECCV)},
  pages     = {20--37},
  publisher = {Springer},
  address   = {Cham, Switzerland},
  year      = {2024},
  doi       = {10.1007/978-3-031-72670-5_2}
}

@inproceedings{vine_iclr25,
  title     = {Robust Watermarking Using Generative Priors Against Image Editing: From Benchmarking to Advances},
  author    = {Lu, Shilin and Zhou, Zihan and Lu, Jiayou and Zhu, Yuanzhi and Kong, Adams Wai-Kin},
  booktitle = {International Conference on Learning Representations (ICLR)},
  pages     = {83902--83936},
  address   = {Singapore},
  year      = {2025},
  url       = {https://proceedings.iclr.cc/paper_files/paper/2025/hash/d077bc9ea82a2998ca6b2d0158b5ac6e-Abstract-Conference.html},
  note      = {arXiv:2410.18775}
}

@inproceedings{sok_watermarking_aigc,
  title     = {{SoK}: Watermarking for {AI}-Generated Content},
  author    = {Zhao, Xuandong and Gunn, Sam and Christ, Miranda and Fairoze, Jaiden and Fabrega, Andres and Carlini, Nicholas and Garg, Sanjam and Hong, Sanghyun and Nasr, Milad and Tram{\`e}r, Florian and Jha, Somesh and Li, Lei and Wang, Yu-Xiang and Song, Dawn},
  booktitle = {2025 IEEE Symposium on Security and Privacy (SP)},
  pages     = {2621--2639},
  publisher = {IEEE},
  address   = {San Francisco, CA, USA},
  year      = {2025},
  doi       = {10.1109/SP61157.2025.00178}
}

@misc{Aaronson22,
  author       = {Aaronson, Scott},
  title        = {My {AI} Safety Lecture for {UT} Effective Altruism},
  howpublished = {Shtetl-Optimized blog post},
  year         = {2022},
  url          = {https://scottaaronson.blog/?p=6823}
}

@inproceedings{Kirchenbauer_23,
  title     = {A Watermark for Large Language Models},
  author    = {Kirchenbauer, John and Geiping, Jonas and Wen, Yuxin and Katz, Jonathan and Miers, Ian and Goldstein, Tom},
  booktitle = {Proceedings of the 40th International Conference on Machine Learning},
  series    = {Proceedings of Machine Learning Research},
  volume    = {202},
  pages     = {17061--17084},
  publisher = {PMLR},
  address   = {Honolulu, HI, USA},
  year      = {2023},
  url       = {https://proceedings.mlr.press/v202/kirchenbauer23a.html}
}

@inproceedings{C:ChrGun24,
  title     = {Pseudorandom Error-Correcting Codes},
  author    = {Christ, Miranda and Gunn, Sam},
  booktitle = {Advances in Cryptology -- CRYPTO 2024},
  series    = {Lecture Notes in Computer Science},
  volume    = {14925},
  pages     = {325--347},
  publisher = {Springer},
  address   = {Cham, Switzerland},
  year      = {2024},
  doi       = {10.1007/978-3-031-68391-6_10}
}

@inproceedings{gunn2025undetectable,
  title     = {An Undetectable Watermark for Generative Image Models},
  author    = {Gunn, Sam and Zhao, Xuandong and Song, Dawn},
  booktitle = {International Conference on Learning Representations (ICLR)},
  address   = {Singapore},
  numpages  = {26},
  year      = {2025},
  url       = {https://openreview.net/forum?id=jlhBFm7T2J},
  note      = {arXiv:2410.07369}
}

@inproceedings{francati2025codinglimits,
  author    = {Francati, Danilo and Goonatilake, Yevin Nikhel and
               Pawar, Shubham Vivek and Venturi, Daniele and
               Ateniese, Giuseppe},
  title     = {The Coding Limits of Robust Watermarking for
               Generative Models},
  booktitle = {2026 IEEE European Symposium on Security and Privacy
               (EuroS\&P)},
  pages     = {1343--1356},
  publisher = {IEEE},
  address   = {Lisbon, Portugal},
  year      = {2026},
  doi       = {10.1109/EuroSP68448.2026.00087},
  url       = {https://doi.org/10.1109/EuroSP68448.2026.00087},
  note      = {IACR ePrint 2025/1620; arXiv:2509.10577}
}

@inproceedings{zhang2024watermarks,
  title     = {Watermarks in the Sand: Impossibility of Strong Watermarking for Language Models},
  author    = {Zhang, Hanlin and Edelman, Benjamin L. and Francati, Danilo and Venturi, Daniele and Ateniese, Giuseppe and Barak, Boaz},
  booktitle = {Proceedings of the 41st International Conference on Machine Learning},
  series    = {Proceedings of Machine Learning Research},
  volume    = {235},
  pages     = {58851--58880},
  publisher = {PMLR},
  address   = {Vienna, Austria},
  year      = {2024},
  url       = {https://proceedings.mlr.press/v235/zhang24o.html},
  note      = {The ePrint/arXiv version uses the broader title ``Watermarks in the Sand: Impossibility of Strong Watermarking for Generative Models'' and includes corresponding image results.}
}

@inproceedings{marra2019gans,
  title     = {Do {GAN}s Leave Artificial Fingerprints?},
  author    = {Marra, Francesco and Gragnaniello, Diego and Verdoliva, Luisa and Poggi, Giovanni},
  booktitle = {2019 IEEE Conference on Multimedia Information Processing and Retrieval (MIPR)},
  pages     = {506--511},
  publisher = {IEEE},
  address   = {San Jose, CA, USA},
  year      = {2019},
  doi       = {10.1109/MIPR.2019.00103}
}

@inproceedings{frank2020leveraging,
  title     = {Leveraging Frequency Analysis for Deep Fake Image Recognition},
  author    = {Frank, Joel and Eisenhofer, Thorsten and Sch{\"o}nherr, Lea and Fischer, Asja and Kolossa, Dorothea and Holz, Thorsten},
  booktitle = {Proceedings of the 37th International Conference on Machine Learning},
  series    = {Proceedings of Machine Learning Research},
  volume    = {119},
  pages     = {3247--3258},
  publisher = {PMLR},
  address   = {Virtual},
  year      = {2020},
  url       = {https://proceedings.mlr.press/v119/frank20a.html}
}

@inproceedings{durall2020watch,
  title     = {Watch Your Up-Convolution: {CNN}-Based Generative Deep Neural Networks Are Failing to Reproduce Spectral Distributions},
  author    = {Durall, Ricard and Keuper, Margret and Keuper, Janis},
  booktitle = {2020 IEEE/CVF Conference on Computer Vision and Pattern Recognition (CVPR)},
  pages     = {7887--7896},
  publisher = {IEEE},
  address   = {Seattle, WA, USA},
  year      = {2020},
  doi       = {10.1109/CVPR42600.2020.00791}
}

@inproceedings{wang2020cnn,
  title     = {{CNN}-Generated Images Are Surprisingly Easy to Spot\ldots for Now},
  author    = {Wang, Sheng-Yu and Wang, Oliver and Zhang, Richard and Owens, Andrew and Efros, Alexei A.},
  booktitle = {2020 IEEE/CVF Conference on Computer Vision and Pattern Recognition (CVPR)},
  pages     = {8692--8701},
  publisher = {IEEE},
  address   = {Seattle, WA, USA},
  year      = {2020},
  doi       = {10.1109/CVPR42600.2020.00872}
}

@inproceedings{ojha2023universal,
  title     = {Towards Universal Fake Image Detectors That Generalize Across Generative Models},
  author    = {Ojha, Utkarsh and Li, Yuheng and Lee, Yong Jae},
  booktitle = {2023 IEEE/CVF Conference on Computer Vision and Pattern Recognition (CVPR)},
  pages     = {24480--24489},
  publisher = {IEEE},
  address   = {Vancouver, BC, Canada},
  year      = {2023},
  doi       = {10.1109/CVPR52729.2023.02345}
}

@inproceedings{he2016resnet,
  title     = {Deep Residual Learning for Image Recognition},
  author    = {He, Kaiming and Zhang, Xiangyu and Ren, Shaoqing and Sun, Jian},
  booktitle = {2016 IEEE Conference on Computer Vision and Pattern Recognition (CVPR)},
  pages     = {770--778},
  publisher = {IEEE},
  address   = {Las Vegas, NV, USA},
  year      = {2016},
  doi       = {10.1109/CVPR.2016.90}
}

@inproceedings{deng2009imagenet,
  title     = {{ImageNet}: A Large-Scale Hierarchical Image Database},
  author    = {Deng, Jia and Dong, Wei and Socher, Richard and Li, Li-Jia and Li, Kai and Fei-Fei, Li},
  booktitle = {2009 IEEE Conference on Computer Vision and Pattern Recognition},
  pages     = {248--255},
  publisher = {IEEE},
  address   = {Miami, FL, USA},
  year      = {2009},
  doi       = {10.1109/CVPR.2009.5206848}
}

@misc{stable_diffusion_21,
  author       = {{Stability AI}},
  title        = {Stable Diffusion v2.1 and DreamStudio Updates 7-Dec 22},
  year         = {2022},
  url          = {https://stability.ai/news/stablediffusion2-1-release7-dec-2022},
  note         = {Official release post.}
}

@inproceedings{ddim_iclr21,
  title     = {Denoising Diffusion Implicit Models},
  author    = {Song, Jiaming and Meng, Chenlin and Ermon, Stefano},
  booktitle = {International Conference on Learning Representations (ICLR)},
  publisher = {OpenReview.net},
  address   = {Virtual Event, Austria},
  numpages  = {20},
  year      = {2021},
  url       = {https://openreview.net/forum?id=St1giarCHLP},
  note      = {arXiv:2010.02502}
}

@inproceedings{ldm_cvpr22,
  title     = {High-Resolution Image Synthesis with Latent Diffusion Models},
  author    = {Rombach, Robin and Blattmann, Andreas and Lorenz, Dominik and Esser, Patrick and Ommer, Bj{\"o}rn},
  booktitle = {Proceedings of the IEEE/CVF Conference on Computer Vision and Pattern Recognition (CVPR)},
  pages     = {10684--10695},
  publisher = {IEEE},
  address   = {New Orleans, LA, USA},
  year      = {2022},
  doi       = {10.1109/CVPR52688.2022.01042},
  url       = {https://openaccess.thecvf.com/content/CVPR2022/html/Rombach_High-Resolution_Image_Synthesis_With_Latent_Diffusion_Models_CVPR_2022_paper.html}
}

@misc{wu2023human,
  author       = {Wu, Xiaoshi and Hao, Yiming and Sun, Keqiang and Chen, Yixiong and Zhu, Feng and Zhao, Rui and Li, Hongsheng},
  title        = {Human Preference Score v2: A Solid Benchmark for Evaluating Human Preferences of Text-to-Image Synthesis},
  year         = {2023},
  doi          = {10.48550/arXiv.2306.09341},
  url          = {https://arxiv.org/abs/2306.09341},
  note      = {arXiv:2306.09341}
}

@techreport{caltech256_tr,
  title       = {Caltech-256 Object Category Dataset},
  author      = {Griffin, Gregory and Holub, Alex and Perona, Pietro},
  institution = {California Institute of Technology},
  number      = {CNS-TR-2007-001},
  year        = {2007},
  url         = {https://resolver.caltech.edu/CaltechAUTHORS:CNS-TR-2007-001}
}

@misc{u512_kaggle,
  author       = {{rhtsingh}},
  title        = {130k Images ($512 \times 512$) - Universal Image Embeddings},
  year         = {2022},
  howpublished = {Kaggle dataset},
  url          = {https://www.kaggle.com/datasets/rhtsingh/130k-images-512x512-universal-image-embeddings}
}

@misc{abstractart_kaggle,
  author       = {{greg115}},
  year         = {2019},
  title        = {Abstract Art Images},
  howpublished = {Kaggle dataset},
  url          = {https://www.kaggle.com/datasets/greg115/abstract-art}
}

@misc{artmix_kaggle,
  author       = {{sankarmechengg}},
  year         = {2023},
  title        = {Art Images: Clear and Distorted},
  howpublished = {Kaggle dataset},
  url          = {https://www.kaggle.com/datasets/sankarmechengg/art-images-clear-and-distorted}
}

@misc{gustavosta_sd_prompts,
  author       = {{Gustavosta}},
  title        = {Stable-Diffusion-Prompts Dataset},
  year         = {2022},
  howpublished = {Hugging Face dataset card},
  url          = {https://huggingface.co/datasets/Gustavosta/Stable-Diffusion-Prompts}
}

@misc{c2pa_spec,
  author       = {{Coalition for Content Provenance and Authenticity}},
  title        = {Content Credentials: {C2PA} Technical Specification},
  year         = {2024},
  url          = {https://spec.c2pa.org/specifications/specifications/2.1/specs/C2PA_Specification.html}
}

@misc{US23,
  author       = {{The White House}},
  title        = {Executive Order 14110: Safe, Secure, and Trustworthy Development and Use of Artificial Intelligence},
  howpublished = {Federal Register, Vol.~88, No.~210},
  year         = {2023},
  url          = {https://www.govinfo.gov/link/cpd/executiveorder/14110}
}

@misc{EU24,
  author       = {{European Parliament and Council of the European Union}},
  title        = {Regulation (EU) 2024/1689 of the European Parliament and of the Council of 13 June 2024 Laying Down Harmonised Rules on Artificial Intelligence and Amending Regulations (EC) No 300/2008, (EU) No 167/2013, (EU) No 168/2013, (EU) 2018/858, (EU) 2018/1139 and (EU) 2019/2144 and Directives 2014/90/EU, (EU) 2016/797 and (EU) 2020/1828 (Artificial Intelligence Act)},
  howpublished = {Official Journal of the European Union},
  year         = {2024},
  url          = {https://eur-lex.europa.eu/eli/reg/2024/1689/oj/eng}
}

@article{vandermaaten2008tsne,
  title   = {Visualizing Data using {t-SNE}},
  author  = {van der Maaten, Laurens and Hinton, Geoffrey},
  journal = {Journal of Machine Learning Research},
  volume  = {9},
  number  = {86},
  pages   = {2579--2605},
  year    = {2008},
  url     = {https://www.jmlr.org/papers/v9/vandermaaten08a.html}
}

@misc{forensic_cost_arxiv26,
  author    = {Gautier Evennou and Ewa Kijak},
  title     = {The Forensic Cost of Watermark Removal: From Dedicated
               Attacks to Image Editing},
  year      = {2026},
  doi       = {10.48550/arXiv.2604.25491},
  url       = {https://arxiv.org/abs/2604.25491},
  note      = {arXiv:2604.25491v2}
}

@inproceedings{sdedit_iclr22,
  author    = {Meng, Chenlin and He, Yutong and Song, Yang and
               Song, Jiaming and Wu, Jiajun and Zhu, Jun-Yan and
               Ermon, Stefano},
  title     = {{SDE}dit: Guided Image Synthesis and Editing with
               Stochastic Differential Equations},
  booktitle = {International Conference on Learning Representations
               (ICLR)},
  year      = {2022}
}

@inproceedings{zhang2023controlnet,
  author    = {Zhang, Lvmin and Rao, Anyi and Agrawala, Maneesh},
  title     = {Adding Conditional Control to Text-to-Image Diffusion
               Models},
  booktitle = {IEEE/CVF International Conference on Computer Vision
               (ICCV)},
  pages     = {3836--3847},
  year      = {2023}
}

@inproceedings{madry2018towards,
  author    = {Madry, Aleksander and Makelov, Aleksandar and
               Schmidt, Ludwig and Tsipras, Dimitris and Vladu, Adrian},
  title     = {Towards Deep Learning Models Resistant to Adversarial
               Attacks},
  booktitle = {International Conference on Learning Representations
               (ICLR)},
  year      = {2018}
}

@inproceedings{zhang2018lpips,
  author    = {Zhang, Richard and Isola, Phillip and Efros, Alexei A.
               and Shechtman, Eli and Wang, Oliver},
  title     = {The Unreasonable Effectiveness of Deep Features as a
               Perceptual Metric},
  booktitle = {IEEE/CVF Conference on Computer Vision and Pattern
               Recognition (CVPR)},
  pages     = {586--595},
  year      = {2018}
}

@inproceedings{marknull_usenix26,
  author    = {Cao, Jie and Li, Qi and Zhang, Zelin and Wu, Xiaodong and
               Liu, Lingshuang and Li, Xiangman and Ni, Jianbing},
  title     = {{MarkNull}: Model-Agnostic Watermark Removal in
               {AI}-Generated Images via On-Manifold Latent Manipulation},
  booktitle = {35th USENIX Security Symposium (USENIX Security 26)},
  year      = {2026},
  address   = {Baltimore, MD},
  publisher = {USENIX Association},
  month     = aug,
  url       = {https://www.usenix.org/conference/usenixsecurity26/presentation/cao}
}

@inproceedings{marksweep_arxiv26,
  author    = {Cao, Jie and Zhang, Zelin and Li, Qi and Ni, Jianbing},
  title     = {{MarkSweep}: A No-Box Removal Attack on {AI}-Generated
               Image Watermarking via Noise Intensification and
               Frequency-Aware Denoising},
  booktitle = {2026 IEEE International Conference on Acoustics, Speech
               and Signal Processing (ICASSP)},
  pages     = {13932--13936},
  publisher = {IEEE},
  address   = {Barcelona, Spain},
  year      = {2026},
  doi       = {10.1109/ICASSP55912.2026.11460613},
  url       = {https://doi.org/10.1109/ICASSP55912.2026.11460613}
}

@inproceedings{bayar_stamm_ihmms16,
  author    = {Bayar, Belhassen and Stamm, Matthew C.},
  title     = {A Deep Learning Approach to Universal Image Manipulation
               Detection Using a New Convolutional Layer},
  booktitle = {Proceedings of the 4th ACM Workshop on Information
               Hiding and Multimedia Security},
  pages     = {5--10},
  year      = {2016},
  publisher = {ACM},
  doi       = {10.1145/2909827.2930786}
}

@inproceedings{corvi_cvprw23,
  author    = {Riccardo Corvi and Davide Cozzolino and Giovanni Poggi
               and Koki Nagano and Luisa Verdoliva},
  title     = {Intriguing Properties of Synthetic Images: From Generative
               Adversarial Networks to Diffusion Models},
  booktitle = {Proceedings of the IEEE/CVF Conference on Computer Vision
               and Pattern Recognition Workshops},
  pages     = {973--982},
  year      = {2023},
  doi       = {10.1109/CVPRW59228.2023.00104}
}

@inproceedings{dosovitskiy2021image,
  title     = {An Image Is Worth 16x16 Words: Transformers for Image Recognition at Scale},
  author    = {Dosovitskiy, Alexey and Beyer, Lucas and Kolesnikov, Alexander and Weissenborn, Dirk and Zhai, Xiaohua and Unterthiner, Thomas and Dehghani, Mostafa and Minderer, Matthias and Heigold, Georg and Gelly, Sylvain and Uszkoreit, Jakob and Houlsby, Neil},
  booktitle = {International Conference on Learning Representations},
  year      = {2021},
  url       = {https://openreview.net/forum?id=YicbFdNTTy}
}

@article{wang2008perfectly,
  author  = {Wang, Ying and Moulin, Pierre},
  title   = {Perfectly Secure Steganography: Capacity, Error Exponents,
             and Code Constructions},
  journal = {IEEE Transactions on Information Theory},
  volume  = {54},
  number  = {6},
  pages   = {2706--2722},
  year    = {2008},
  doi     = {10.1109/TIT.2008.921684},
  url     = {https://doi.org/10.1109/TIT.2008.921684}
}

@inproceedings{schroederdewitt2023perfectly,
  author    = {Schroeder de Witt, Christian and Sokota, Samuel and
               Kolter, J. Zico and Foerster, Jakob and
               Strohmeier, Martin},
  title     = {Perfectly Secure Steganography Using Minimum Entropy
               Coupling},
  booktitle = {International Conference on Learning Representations
               (ICLR)},
  year      = {2023},
  url       = {https://openreview.net/forum?id=HQ67mj5rJdR},
  note      = {arXiv:2210.14889}
}

@inproceedings{tan2019efficientnet,
  title     = {EfficientNet: Rethinking Model Scaling for Convolutional Neural Networks},
  author    = {Tan, Mingxing and Le, Quoc V.},
  booktitle = {Proceedings of the 36th International Conference on Machine Learning},
  pages     = {6105--6114},
  year      = {2019},
  volume    = {97},
  series    = {Proceedings of Machine Learning Research},
  publisher = {PMLR}
}

@inproceedings{loshchilov2019adamw,
  title     = {Decoupled Weight Decay Regularization},
  author    = {Loshchilov, Ilya and Hutter, Frank},
  booktitle = {International Conference on Learning Representations},
  year      = {2019}
}

@inproceedings{tomasi1998bilateral,
  author    = {Tomasi, Carlo and Manduchi, Roberto},
  title     = {Bilateral Filtering for Gray and Color Images},
  booktitle = {Proceedings of the Sixth International Conference on
               Computer Vision},
  pages     = {839--846},
  publisher = {IEEE Computer Society},
  year      = {1998},
  doi       = {10.1109/ICCV.1998.710815},
  url       = {https://doi.org/10.1109/ICCV.1998.710815}
}

@article{wallace1991jpeg,
  author    = {Wallace, Gregory K.},
  title     = {The {JPEG} Still Picture Compression Standard},
  journal   = {Communications of the ACM},
  volume    = {34},
  number    = {4},
  pages     = {30--44},
  month     = apr,
  publisher = {ACM},
  year      = {1991},
  doi       = {10.1145/103085.103089},
  url       = {https://doi.org/10.1145/103085.103089}
}

@article{dip_watermark_tmlr25,
  author  = {Liang, Hengyue and Li, Taihui and Sun, Ju},
  title   = {A Baseline Method for Removing Invisible Image Watermarks
             Using Deep Image Prior},
  journal = {Transactions on Machine Learning Research},
  year    = {2025},
  url     = {https://openreview.net/forum?id=g85Vxlrq0O}
}

@inproceedings{neurips_removal_challenge25,
  author    = {Shamshad, Fahad and Bakr, Tameem and Shaaban, Yahia
               Salaheldin and Hussein, Noor and Nandakumar, Karthik and
               Lukas, Nils},
  title     = {First-Place Solution to {NeurIPS} 2024 Invisible
               Watermark Removal Challenge},
  booktitle = {1st Workshop on Generative and Protective AI for Content
               Creation (GenProCC), NeurIPS},
  year      = {2025},
  url       = {https://openreview.net/forum?id=SIrUucuYUq}
}

@inproceedings{wang_ldpc_prc_usenix26,
  author    = {Wang, Tianrui and Wang, Anyu and Cong, Tianshuo and
               Ran, Delong and Liu, Jinyuan and Wang, Xiaoyun},
  title     = {Cryptanalysis of {LDPC}-Based Pseudorandom
               Error-Correcting Codes},
  booktitle = {35th USENIX Security Symposium (USENIX Security 26)},
  publisher = {USENIX Association},
  year      = {2026},
  url       = {https://www.usenix.org/conference/usenixsecurity26/presentation/wang-tianrui}
}
\endgroup

\ifconf\else
  \clearpage
\fi
\appendix
\crefalias{section}{appendix}

\ifconf
  \section{Generative AI Usage}
  \label{app:ai-usage}
  We used Claude (Anthropic) and ChatGPT (OpenAI) for language editing and
  to draft portions of experimental code. The authors reviewed all
  suggestions, ran the experiments end-to-end, and take responsibility for
  all claims, results, and references.
\fi

\section{Additional Experimental and Reproducibility Details}
\label{app:experimental-details}

\label{app:training-details}
\paragraph{Training and datasets.}
All six attack-specific detectors follow \Cref{sec:setup:detector}.
Training uses label-smoothed cross-entropy ($0.1$) and
AdamW~\cite{loshchilov2019adamw} (weight decay $10^{-4}$), with cosine
decay after the five-epoch warmup. Runs stop after 50 epochs or seven
epochs without validation-AUROC improvement and use NVIDIA A100~80\,GB
GPUs. No global seed is fixed, and released manifests define the splits. The
scores support metric recomputation and the checkpoints support inference
on those splits; exact retraining is not claimed.

\begin{table}[H]
\centering
\small
\caption{\textbf{Attack-specific dataset sizes.}
Base combines both labels before tampered augmentation; Total includes
tampered examples.}
\label{tab:per-attack-data}
\begin{tabular*}{\columnwidth}{@{\extracolsep{\fill}}llrr@{}}
\toprule
\textbf{Attack} & \textbf{Sources} & \textbf{Base} & \textbf{Total} \\
\midrule
UnMarker       & All five          & 157{,}984 & 197{,}984 \\
WMA            & All five          & 157{,}984 & 197{,}984 \\
CtrlRegen+     & All five          & 157{,}984 & 197{,}984 \\
NFPA           & SD2.1 only        &  49{,}992 &  69{,}992 \\
Boundary Leak. & SD2.1 only        &  19{,}993 &  22{,}993 \\
WiTS           & All five (subset) &  29{,}136 &  32{,}136 \\
\bottomrule
\end{tabular*}
\end{table}

\label{app:pooled-detector}
\paragraph{Pooled detector and representation probe.}
The pooled detector uses a single global progenitor split across all six
attack datasets, keeping each source image and all its derivatives in one
partition. On the corrected held-out mixture, it reaches
AUROC~$0.9976$ and micro-averaged TPR~$97.62\%$ at the
validation-frozen $1\%$-FPR target; per-remover recall ranges from
$95.14\%$ to $99.30\%$. This is performance on the evaluated mixture,
not a universal estimate.

The t-SNE in \Cref{fig:tsne-pooled} projects the earlier checkpoint's
$2{,}048$-dimensional penultimate features with perplexity~50 and PCA
initialization. For quantitative evaluation, a multinomial
logistic-regression probe applied to the corrected checkpoint's
representation distinguishes clean images from the six output sets. Across $20{,}577$ held-out images, fit with $5$-fold GroupKFold on
source identifier, balanced accuracy is $0.971$, versus $0.600$ for
frozen ImageNet-1K features and $0.143$ for random guessing. Because class
sources differ, this is output-set separability, not attack-mechanism
identification or deployment performance.

\section{Validation, Robustness, and Adaptive Evaluation}
\label{app:validation}

\begin{table*}[t]
\centering
\ifconf\small\else\footnotesize\fi
\caption{\textbf{Cross-attack representation controls.}
Arrows compare native evaluation with BMP re-encoding.
$^{a}$Downsample to $256^2$ and restore to $512^2$.
$^{b}$JPEG Q75, resize to 80\%, JPEG Q85, then restore.}
\label{tab:cross-attack-controls}
\setlength{\tabcolsep}{3pt}
\begin{tabular*}{\textwidth}{@{\extracolsep{\fill}}lrrrrrr@{}}
\toprule
\textbf{Control}
& \textbf{UnMarker}
& \textbf{WMA}
& \textbf{CtrlRegen+}
& \textbf{NFPA}
& \textbf{WiTS}
& \textbf{BL.} \\
\midrule
AUROC, Native$\rightarrow$BMP
& 0.9998$\rightarrow$0.9998
& 0.9994$\rightarrow$0.9994
& 0.9999$\rightarrow$0.9999
& 0.9994$\rightarrow$0.9994
& 0.9979$\rightarrow$0.9979
& 0.9995$\rightarrow$0.9995 \\

TPR@0.1\%, Nat$\rightarrow$BMP
& 0.9976$\rightarrow$0.9976
& 0.9462$\rightarrow$0.9462
& 0.9900$\rightarrow$0.9900
& 0.9373$\rightarrow$0.9373
& 0.9623$\rightarrow$0.9623
& 0.9883$\rightarrow$0.9883 \\

Size AUC, Nat.$\rightarrow$BMP
& 0.765$\rightarrow$0.500
& 0.566$\rightarrow$0.500
& 0.553$\rightarrow$0.500
& 0.530$\rightarrow$0.500
& 0.576$\rightarrow$0.500
& 0.580$\rightarrow$0.500 \\

Canonical PNG AUROC
& 0.9998 & 0.9994 & 0.9999 & 0.9994 & 0.9979 & 0.9995 \\

Grayscale AUROC
& 0.987 & 0.988 & 1.000 & 0.998 & 1.000 & 0.997 \\

Downsample AUROC$^{a}$
& 0.986 & 1.000 & 1.000 & 0.997 & 0.997 & 1.000 \\

Social-media AUROC$^{b}$
& 0.766 & 0.999 & 0.995 & 0.995 & 0.995 & 0.999 \\
\bottomrule
\end{tabular*}
\end{table*}

\label{app:cross-attack-controls}
\paragraph{Representation controls.}
BMP and canonical-PNG re-encoding test container, encoding, and
file-size shortcuts; BMP cannot undo artifacts already present in the
pixels. Grayscale, downsampling, and the specified social-media-style
pipeline are transformation stress tests. Their AUROC is at least
$0.985$, except for UnMarker under the latter pipeline ($0.766$).

Metadata audits found no class-specific pattern, and recorded
identifiers produced no exact cross-split duplicates. For NFPA's attack-specific detector dataset, the recorded source identifier
is too coarse to independently reconstruct progenitor-level disjointness.
This limitation does not apply to the separately constructed joint-evaluation cohort, whose outputs and progenitors are disjoint from detector training, validation, and threshold calibration as stated in
\Cref{sec:results:joint}. Family-specific benign controls separately
test generic processing (\Cref{sec:results:validations}).

\label{app:jpeg}
\paragraph{JPEG recompression.}
We apply the same JPEG operation to both classes and read TPRs from each
held-out ROC; this sweep uses an earlier UnMarker checkpoint than
\Cref{tab:unified-detection}.

\begin{table}[H]
\centering
\small
\caption{\textbf{UnMarker detection under JPEG recompression.}}
\label{tab:jpeg-sweep}
\begin{tabular*}{\columnwidth}{@{\extracolsep{\fill}}lrrr@{}}
\toprule
\textbf{Setting}
& \textbf{AUROC}
& \makecell[r]{\textbf{TPR}\\\textbf{@1\%\,FPR}}
& \makecell[r]{\textbf{TPR}\\\textbf{@0.1\%\,FPR}} \\
\midrule
Native  & 0.9994 & 0.9981 & 0.9828 \\
JPG Q90 & 0.9970 & 0.9663 & 0.9192 \\
JPG Q85 & 0.9941 & 0.9326 & 0.2779 \\
JPG Q75 & 0.9771 & 0.7452 & 0.2823 \\
\bottomrule
\end{tabular*}
\end{table}

\label{app:adaptive}
\paragraph{Adaptive evasion and retraining.}
We evaluate detector adaptation on $6{,}000$ CtrlRegen+ outputs.
Starting from each output, we run white-box $\ell_\infty$-bounded
PGD~\cite{madry2018towards} for $60$ steps to minimize the target
detector's score, using $\epsilon=8/255$, step size $1/255$, and an
LPIPS cap of $0.35$. All evasion rates use the detector's $0.5$
decision threshold.

After each round, a fresh ResNet-50 arbiter is initialized from
ImageNet-1K and trained on the resulting outputs. The held-out test set
contains $900$ adaptive and $905$ clean images, with no progenitor
overlap across training, validation, and test. Round one attacks the
original production detector. Each later round attacks the arbiter
trained in the preceding round, but restarts from the original
CtrlRegen+ outputs rather than accumulating edits.
\Cref{tab:adaptive-summary} reports the result of each round.

PGD evades the production detector on $99.97\%$ of outputs in the first
round. We then retrain the arbiter after each round. Against these
retrained arbiters, evasion falls to $0.35\%$, $0.63\%$, and $1.38\%$,
while AUROC remains between $0.9998$ and $1.000$. Median incremental
LPIPS is $0.10$--$0.11$ per round. Each round starts from the original
CtrlRegen+ outputs, so the perturbations do not accumulate. The
cumulative LPIPS of $0.269$ in \Cref{tab:stealth-attempts} comes from a
separate experiment with $3{,}000$ outputs and is measured relative to
the corresponding CtrlRegen+ output.

The PRC verifier was not run on these adaptive outputs. These experiments
therefore evaluate detector evasion and fidelity, not watermark rejection
or complete three-gate success. These finite, CtrlRegen+-only results
show that evading one fixed detector does not establish forensic stealth;
they do not establish robustness to every adaptive attack or transfer to
other removers.

\begin{table}[H]
\centering
\footnotesize
\caption{\textbf{Four-round adaptive retraining on CtrlRegen+.}
Round one attacks the production detector; each later round attacks the
arbiter trained in the preceding round. Every round restarts from the
original CtrlRegen+ outputs. Evasion uses the attacked detector's $0.5$
decision threshold, and Fresh AUROC evaluates a newly trained arbiter on
the current round's held-out outputs.}
\label{tab:adaptive-summary}
\begin{tabular*}{\columnwidth}{@{\extracolsep{\fill}}llrr@{}}
\toprule
\textbf{Round}
& \textbf{Attacked model}
& \textbf{Evasion}
& \textbf{Fresh AUROC} \\
\midrule
1 & Production detector & 99.97\% & 1.0000 \\
2 & Round-1 arbiter     & 0.35\%  & 1.0000 \\
3 & Round-2 arbiter     & 0.63\%  & 1.0000 \\
4 & Round-3 arbiter     & 1.38\%  & 0.9998 \\
\bottomrule
\end{tabular*}
\end{table}

\label{app:stealth-attempts}
\paragraph{Earlier stealth-aware attempts.}
\Cref{tab:stealth-attempts} summarizes four representative ablations.
They measure fidelity and separability, not complete three-gate success. The
battery contains raw ResNet-50, EfficientNet-B0~\cite{tan2019efficientnet}, residual ResNet-50, and DCT ResNet-50.
SDEdit~\cite{sdedit_iclr22} outputs remain separable; whitening moves its
optimized statistic away from the clean target; ControlNet
regeneration~\cite{zhang2023controlnet} fails fidelity and remains
separable.

\begin{table*}[t]
  \centering
  \footnotesize
  \caption{\textbf{Prior stealth-aware ablations.}
Compliance uses cumulative LPIPS to the verified progenitor.
Detectability is the maximum AUROC over the four-detector battery;
the largest matched-clean control AUROC is 0.528. }
  \label{tab:stealth-attempts}
  \setlength{\tabcolsep}{3pt}
  \begin{tabular}{@{}p{0.125\textwidth}
    p{0.045\textwidth}p{0.165\textwidth}p{0.165\textwidth}
    p{\dimexpr0.5\textwidth-8\tabcolsep\relax}@{}}
    \toprule
    \textbf{Method} & \textbf{$n_{\mathrm{run}}$}
      & \textbf{Cumulative LPIPS [compliance]}
      & \textbf{Battery AUROC [95\% CI]}
      & \textbf{Observed failure} \\
    \midrule
    SDEdit \cite{sdedit_iclr22} & 2,834
      & 0.288 [$336/500$; 67.2\%]
      & 0.937 [0.919, 0.954]
      & The edited outputs remain separable; the responsible mechanism
        was not isolated. \\
    $\ell_\infty$ PGD & 3,000   
      & 0.269 [$443/500$; 88.6\%]   
      & 1.000   
      & In this separate run, target-detector evasion reaches $100.0\%$ at the $0.5$ decision threshold, but the held-out battery completely separates the outputs. \\
    Higher-order whitening & 1,014
      & 0.317 [$320/500$; 64.0\%]
      & 1.000
      & The corrected optimized statistic moves from 0.842 to 1.093,
        away from the clean target. \\
    ControlNet regeneration \cite{zhang2023controlnet} & 400
      & 0.553 [$2/400$; 0.5\%]
      & 0.996 [0.988, 1.000]
      & Structural conditioning improves fidelity but does not remove
        finite-test separation. \\
    \bottomrule
  \end{tabular}
\end{table*}

\ifconf\else
  \FloatBarrier
\fi
\section{Supporting Forensic Analysis}
\label{app:forensic-support}

\label{app:visual-comparison}
\paragraph{Illustrative examples.}
\Cref{fig:cross-attack-comparison} gives one illustrative pair for each
attack. These are qualitative; quantitative fidelity appears in
\Cref{tab:unified-detection,tab:joint-outcome}.

\ifconf
  \begin{figure*}[t]
  \centering
  \small
  \setlength{\tabcolsep}{1.5pt}
  \begin{tabular}{cccccc}
  \toprule
  \textbf{UnMarker} & \textbf{WMA} & \textbf{CtrlRegen+}
  & \textbf{NFPA} & \textbf{Boundary Leak.} & \textbf{WiTS} \\
  \midrule
  \includegraphics[width=0.145\linewidth]{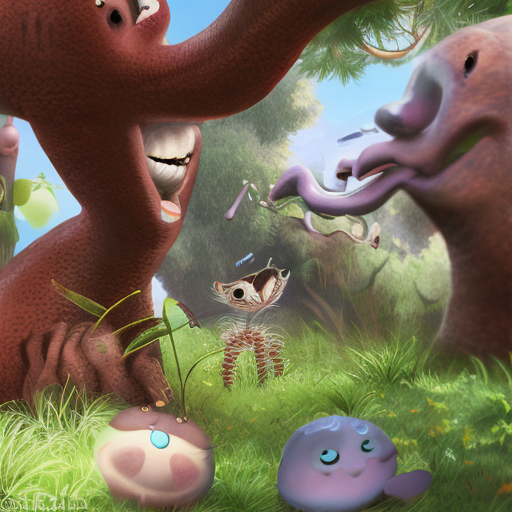} &
  \includegraphics[width=0.145\linewidth]{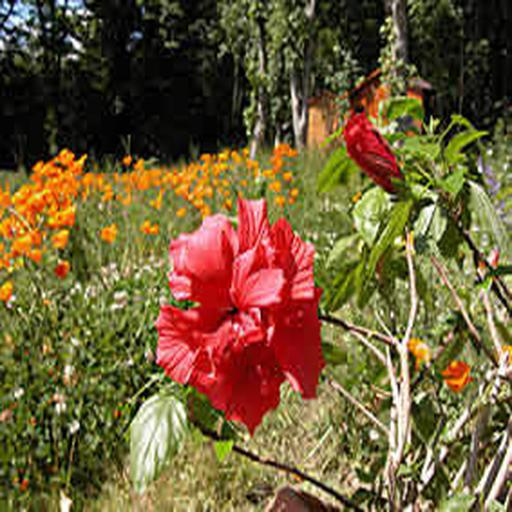} &
  \includegraphics[width=0.145\linewidth]{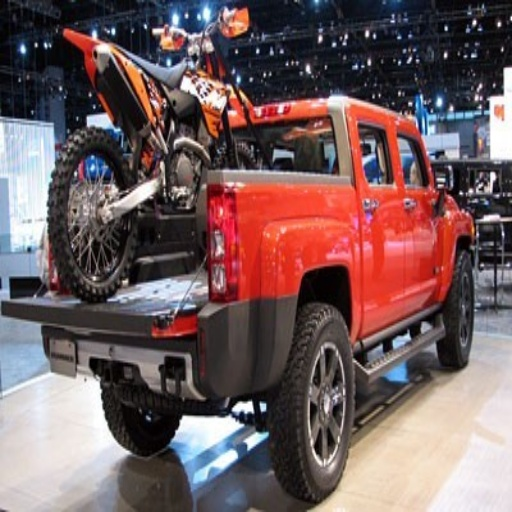} &
  \includegraphics[width=0.145\linewidth]{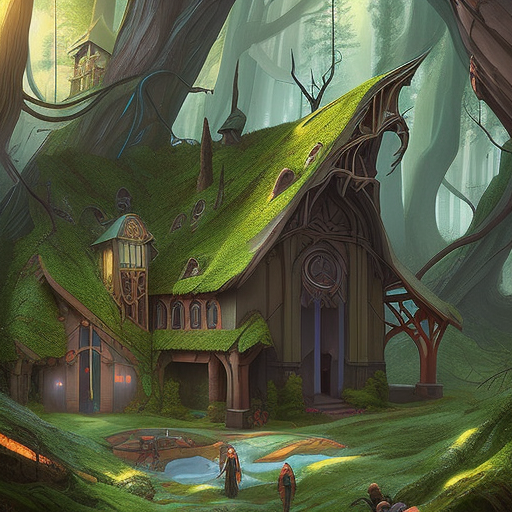} &
  \includegraphics[width=0.145\linewidth]{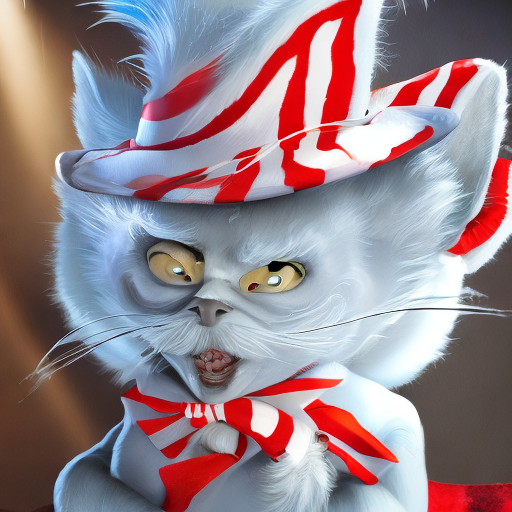} &
  \includegraphics[width=0.145\linewidth]{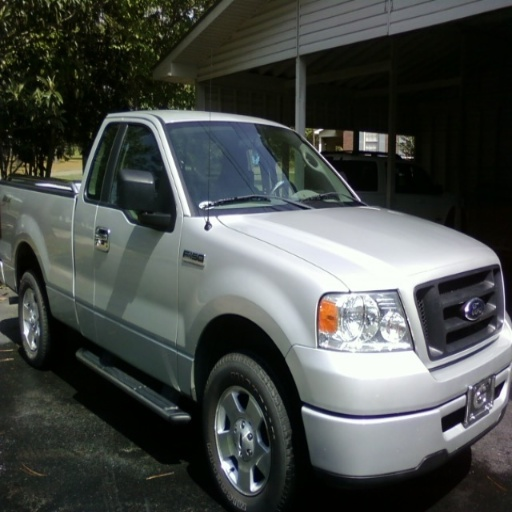} \\
  \includegraphics[width=0.145\linewidth]{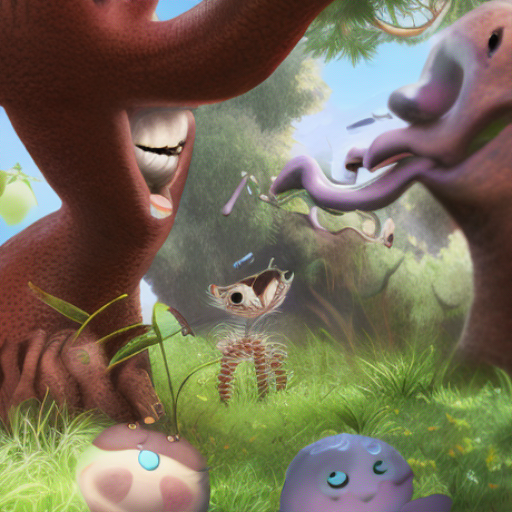} &
  \includegraphics[width=0.145\linewidth]{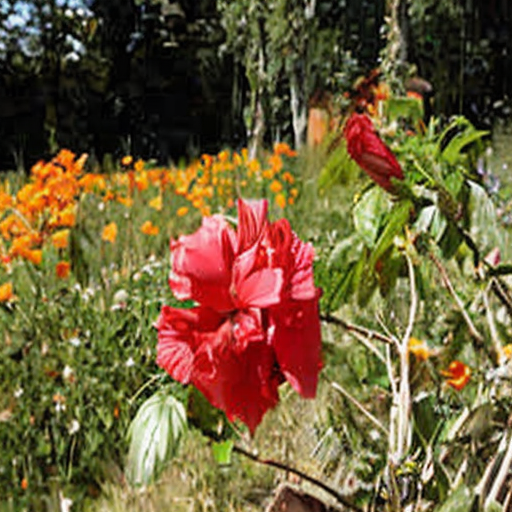} &
  \includegraphics[width=0.145\linewidth]{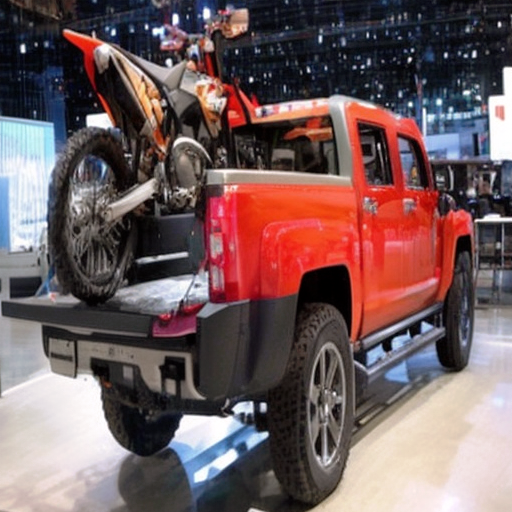} &
  \includegraphics[width=0.145\linewidth]{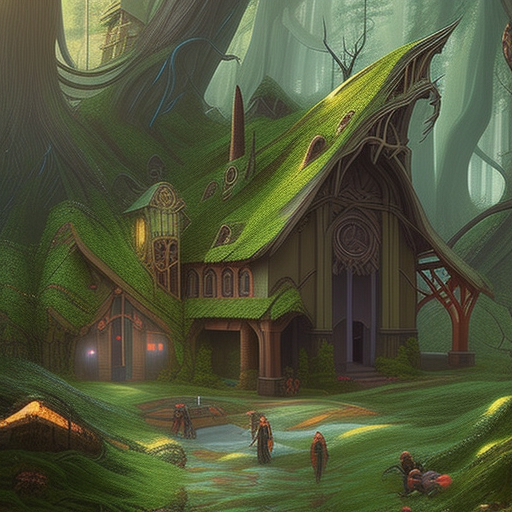} &
  \includegraphics[width=0.145\linewidth]{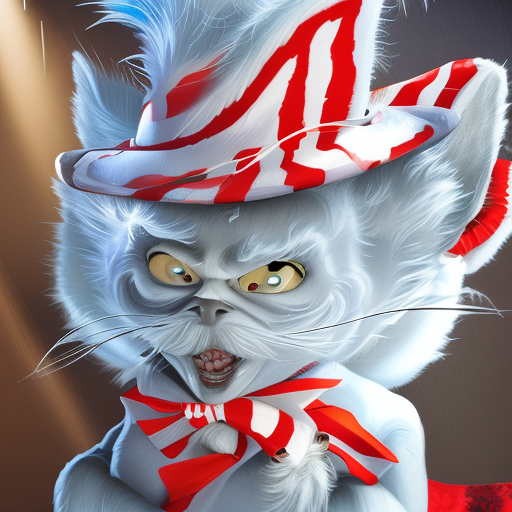} &
  \includegraphics[width=0.145\linewidth]{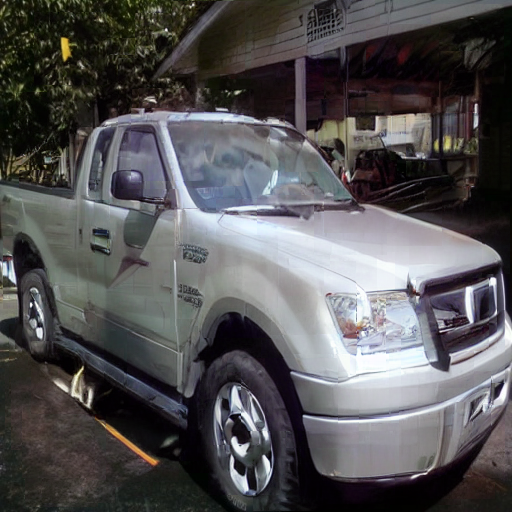} \\
  \bottomrule
  \end{tabular}
\else
  \begin{figure}[H]
  \centering
  \footnotesize
  \setlength{\tabcolsep}{1pt}
  \begin{tabular}{ccc}
  \toprule
  \textbf{UnMarker} & \textbf{WMA} & \textbf{CtrlRegen+} \\
  \midrule
  \includegraphics[width=0.30\linewidth]{figures/clean-ex} &
  \includegraphics[width=0.30\linewidth]{figures/WMA-pre.png} &
  \includegraphics[width=0.30\linewidth]{figures/ctrlregen_pre.png} \\
  \includegraphics[width=0.30\linewidth]{figures/unmarked-ex} &
  \includegraphics[width=0.30\linewidth]{figures/WMA-ex.png} &
  \includegraphics[width=0.30\linewidth]{figures/ctrlregen_ex.png} \\
  \midrule
  \textbf{NFPA} & \textbf{Boundary Leak.} & \textbf{WiTS} \\
  \midrule
  \includegraphics[width=0.30\linewidth]{figures/NFPA-pre.png} &
  \includegraphics[width=0.30\linewidth]{figures/lnp_pre.png} &
  \includegraphics[width=0.30\linewidth]{figures/WiTS_pre.png} \\
  \includegraphics[width=0.30\linewidth]{figures/NFPA-ex.png} &
  \includegraphics[width=0.30\linewidth]{figures/lnp_ex.png} &
  \includegraphics[width=0.30\linewidth]{figures/WiTS_ex.png} \\
  \bottomrule
  \end{tabular}
\fi
\caption{\textbf{Illustrative source--output pairs for the six evaluated
removal attacks.} Top: pre-attack anchor; bottom: attack output.
Boundary Leakage uses an author-released watermarked anchor; the other
pairs follow their respective attack constructions.}
\label{fig:cross-attack-comparison}
\ifconf
  \end{figure*}
\else
  \end{figure}
\fi

\label{app:spectral-methods}
\paragraph{Residual-spectrum methodology.}
Separate analysis manifests pair sources and attacked outputs. From
$512\times512$ uint8 sRGB code values cast to float64 and scaled to
$[0,1]$, we form $r=x_a-x$. Each executable attack uses $N=5{,}000$
pairs, except WiTS ($N=2{,}000$) and Boundary Leakage ($N=4{,}997$).

\ifconf\else
  \begin{figure}[H]
  \centering
  \includegraphics[width=0.78\linewidth]{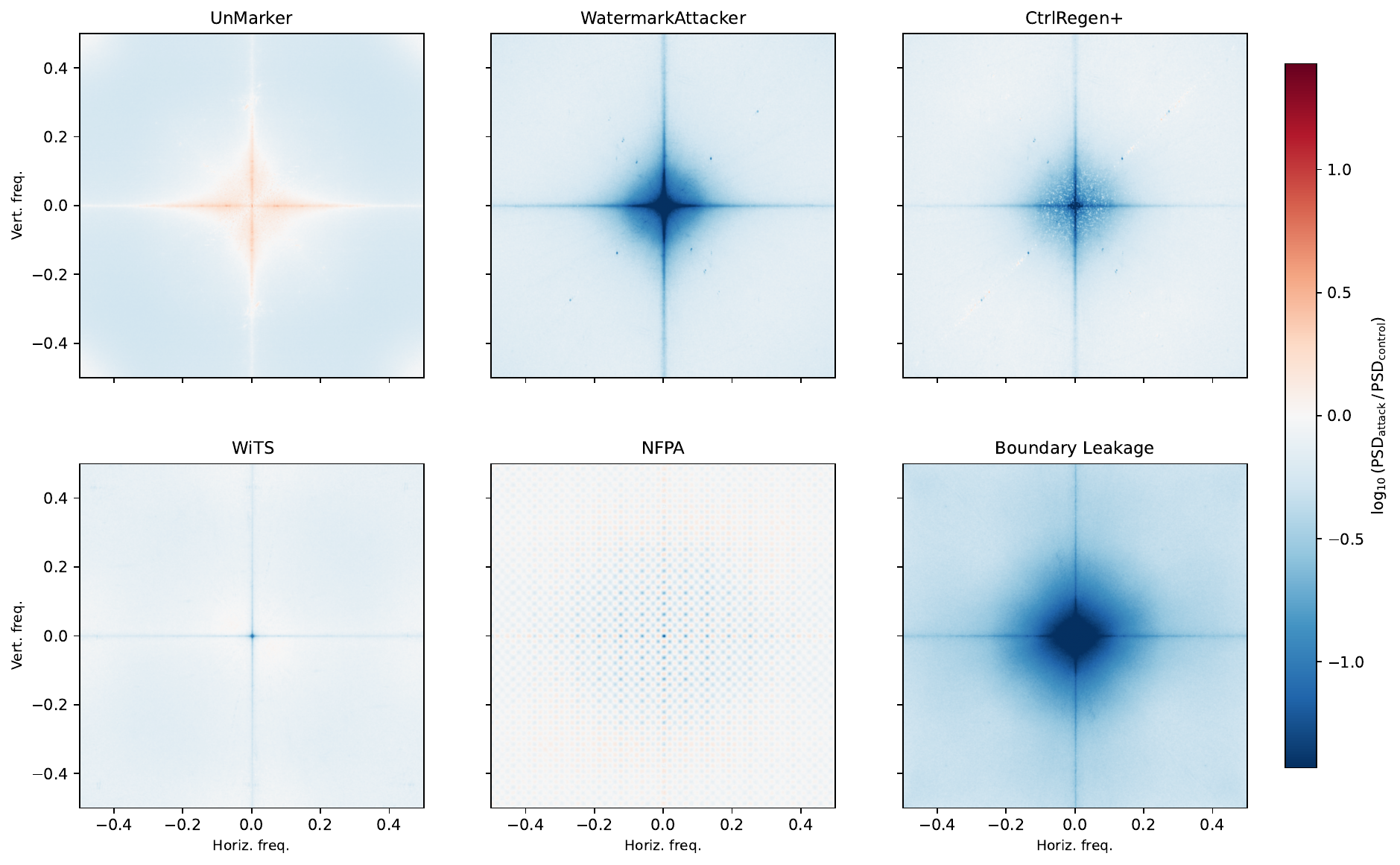}
  \caption{\textbf{2D attack-residual spectra relative to a
  source-matched inter-image reference.}
  Base-10 log-ratio of mean residual PSD to mean reference PSD, with a shared
  color scale: red is positive (more residual power than the reference), blue
  is negative. Boundary Leakage uses a watermarked anchor and is
  not magnitude-comparable with the other attacks.}
  \label{fig:2d-logratio}
  \end{figure}
\fi

Residuals are converted to grayscale, Hann-windowed, transformed by a
centered 2D FFT, and squared. We average PSDs before taking ratios and
apply $(HW)^{-2}$ FFT-size scaling, which does not equalize attack
energies. sRGB is not linearized. Radial profiles use 257 bins over
$[0,0.5]$ cycles/pixel.

The same pipeline applied to unpaired clean-image differences from the
same source mixture gives a source-matched inter-image reference. It is
neither content- nor distortion-matched. We report the base-10
log-ratio of mean residual PSD to mean reference PSD. Its sign indicates
relative residual power, not gain or loss in the attacked image's own
spectrum; the comparison is descriptive, not causal.
\Cref{fig:2d-logratio} shows this statistic before radial averaging.

\ifconf
  \begin{figure}[H]
  \centering
  \includegraphics[width=\linewidth]{figures/combined_2d_logratio_v2.gray.pdf}
  \caption{\textbf{2D attack-residual spectra relative to a
  source-matched inter-image reference.}
  Base-10 log-ratio of mean residual PSD to mean reference PSD, with a shared
  color scale: red is positive (more residual power than the reference), blue
  is negative. Boundary Leakage uses a watermarked anchor and is
  not magnitude-comparable with the other attacks.}
  \label{fig:2d-logratio}
  \end{figure}
\fi

\end{document}